\documentclass[11pt]{article}
\usepackage[]{graphicx}
\usepackage[]{color}
\usepackage{booktabs}
\usepackage{lscape}
\usepackage{alltt}
\usepackage{bbm}
\usepackage{color}
\usepackage[english]{babel}
\usepackage{graphics}
\usepackage{graphicx}
\usepackage{epsfig}
\usepackage{lscape}
\usepackage{amsmath,amsfonts,amssymb, amsthm}
\usepackage{eurosym}
\usepackage{natbib}
\usepackage{url}
\urlstyle{tt}
\usepackage{hyperref}
\usepackage{setspace}
\usepackage{caption}
\usepackage{subcaption}
\usepackage{arydshln}
\usepackage{rotating}
\usepackage[title]{appendix}
\usepackage{todonotes}
\usepackage{pdflscape}
\usepackage{etoolbox}
\usetikzlibrary{patterns}
\AtBeginEnvironment{quote}{\small}
\usepackage{algorithm}
\usepackage{algpseudocode}
\interfootnotelinepenalty=10000
\bibliographystyle{apalike}

\addtolength{\oddsidemargin}{-.75in}
\addtolength{\evensidemargin}{-.75in}
\addtolength{\textwidth}{1.6in}
\addtolength{\topmargin}{-.9in}
\addtolength{\textheight}{1.4in}

\usetikzlibrary{matrix, positioning, backgrounds}
\usetikzlibrary{arrows, decorations.markings,positioning}


\newtheorem{theorem}{Theorem}

\newtheorem{corollary}[theorem]{Corollary}
\newtheorem{proposition}[theorem]{Proposition}

\newtheorem{definition}[theorem]{Definition}
\newtheorem{remark}[theorem]{Remark}
\def\@begintheorem#1#2{\trivlist \item[\hskip \labelsep{\bf #1\ #2.}]\sl}
\renewenvironment{abstract}
 {\begin{center}\normalsize\textsc{Abstract}%
 \end{center}\begin{quote}\normalsize}
 {\end{quote}}

\usepackage{bm}

\renewcommand{\epsilon}{\varepsilon}

\newcommand\y{{\bm y}}

\newcommand\x{{\bm x}}

\usepackage{tikz}

\bibpunct{(}{)}{;}{a}{}{}
\begin{document}
\doublespacing

\title{Individual Treatment Effect: Prediction Intervals and Sharp Bounds}
\author{Zhehao Zhang and Thomas S. Richardson\\
University of Washington, Seattle \\
Department of Statistics}

\maketitle
\begin{abstract}
Individual treatment effect (ITE) is often regarded as the ideal target of inference in causal analyses and has been the focus of several recent studies. In this paper, we describe the intrinsic limits regarding what can be learned concerning ITEs given data from large randomized experiments. We consider when a valid prediction interval for the ITE is informative and when it can be bounded away from zero. The joint distribution over potential outcomes is only partially identified from a randomized trial. Consequently, to be valid, an ITE prediction interval must be valid for all joint distribution consistent with the observed data and hence will in general be wider than that resulting from knowledge of this joint distribution. We characterize prediction intervals in the binary treatment and outcome setting, and extend these insights to models with continuous and ordinal outcomes. We derive sharp bounds on the probability mass function (pmf) of the individual treatment effect (ITE). Finally, we 
contrast prediction intervals for the ITE and confidence intervals for the average treatment effect (ATE). This also leads to the consideration of Fisher versus Neyman null hypotheses. While confidence intervals for the ATE shrink with increasing sample size due to its status as a population parameter, prediction intervals for the ITE generally do not vanish, leading to scenarios where one may reject the Neyman null yet still find evidence consistent with the Fisher null, highlighting the challenges of individualized decision-making under partial identification.
\end{abstract}
\section{Introduction}
The traditional causal inference literature has been focused on population level treatment effect parameters such as the average treatment effect (ATE) and conditional average treatment effect (CATE). Although the individual treatment effect (ITE) is often regarded as the ideal parameter of interest for personalized decision making, it is not, in general identifiable even if we know the outcome for an individual and have data from a large randomized experiment. Recent works by \cite{lei2021conformal}, \cite{jin2023sensitivity}, \cite{chernozhukov2023toward},\cite{wang2025conformal} discuss conformal inference methods to estimate ITE and prediction intervals for the ITE. Another recent debate from \cite{mueller2022personalized} and \cite{dawid2023personalised} also explores the possibility of using ITE and bounds on ITE to help personalized decision making. 

Bounds and relationships on the probability of the counterfactual treatment effect has been studied in terms of probability of causation in \cite{robins1989probability} and \cite{tian2000probabilities}. Some more recent work from \cite{mueller2021causes,sani2023bounding,kawakami2025mediation} extends these ideas to learn individual responses and bounds from causal diagrams and mediation analysis. Inference and examples of probability of causation have been discussed in \cite{dawid2016statistical}. \cite{fan2010sharp} study sharp bounds on the cdf of the individual treatment effect using copulas based on the previous result from \cite{frank1987best} and \cite{williamson1990probabilistic}. \cite{mullahy2018individual} applies the bounds in \cite{fan2010sharp} in health economics applications. Bounds and optimal policy for binary treatment and outcome have been studied in  \cite{kallus2022treatment}, \cite{kallus2022s}, and \cite{dawid2023personalised}. Individual treatment effects on ordinal outcomes have been studied in \cite{lu2018treatment}. Numerical approaches for solving causal inference problems in discrete settings have been explored in \cite{duarte2024automated}. Exact inference on individual treatment effects based on permutation has been studied in \cite{blaker2000confidence,rigdon2015randomization,chiba2015exact}. The construction of confidence intervals for causal parameters has been discussed in \cite{robins1988confidence}, \cite{imbens2018causal} and \cite{brennan2024causal}.

In this work, we try to understand the prediction intervals and bounds for the ITE given that we have observations from well-conducted large randomized control trials (RCT). In Section \ref{sec:ITE_prediction_binary}, we start from the binary treatment and outcome model and provide a complete characterization of ITE prediction intervals under this simple model setting. We characterize when a degenerate interval consisting of a single value is a valid prediction interval, when is the valid prediction interval for ITE non-negative/non-positive, and when the only valid prediction interval is a trivial interval. Additionally, we give conditions on the observed data for there to exist a consistent joint distribution under which a given non-trivial prediction interval would be valid.
In Section~\ref{sec:beyond_binary}, we extend our insights to continuous and ordinal outcomes. Our approach leverages cdf bounds for the difference of two random variables from \cite{fan2010sharp} and \cite{zhang2024bounds}.
In Section \ref{sec:binary_Frechet}, we provide sharp bounds on the cumulative distribution function (cdf) and probability mass function (pmf) of the ITE under binary treatment and outcome model. We explore the relationships between the cdf/pmf bounds on ITE and the Fr\'{e}chet-Hoeffding bounds on two random variables. This motivates the general sharp bounds on the pmf of the ITE for ordinal outcome in section \ref{sec:sharp_PMF_bounds}. 
Lastly, we compare ITE prediction intervals and ATE confidence intervals and provide a synthetic data example to discuss the implications of them in section \ref{sec:discuss_ate_ite}.


\section{The limits to inference for individual treatment effects in binary treatment and outcome model}\label{sec:ITE_prediction_binary}

Throughout the paper, we consider a binary treatment setting with treatment $D=0,1$. Let $Y_1$ be the potential outcome when receiving the treatment and $Y_0$ be the potential outcome when not receiving the treatment. We define the individual treatment effect as:
\begin{align*}
\text{ITE} = Y_1-Y_0.
\end{align*}
Let $Y$ be the observed variable. By consistency, $Y=Y_0$ when $D=0$ and $Y=Y_1$ when $D=1$. In this section, we will examine what is possible to learn in the limit of a large sample size.
\subsection{Definition of a prediction interval for the individual treatment effect (ITE)}

A $(1-\alpha)$ {\em prediction interval} for an individual treatment effect is an interval such that
\[
P\left((Y_1-Y_0) \in [L,R]\right) \geq 1-\alpha.
\]
Throughout the paper, we assume $\alpha$ is sufficiently bounded away from $0.5$.

\begin{proposition}
     Suppose an interval $\mathbb{I}$ is a valid $(1-\alpha)\%$ prediction interval. If $P(\mathrm{ITE}\in A)>\alpha$ for some set $A$, then $\mathbb{I}\cap A \neq \emptyset$. If $P(\mathrm{ITE}\in A)\leq \alpha$ for some set $A$, then there exists a $(1-\alpha)$ valid prediction set that does not intersect with $A$. In particular, if $\mathbb{R}\setminus A$ is an interval, then it will be a valid $(1-\alpha)$ prediction interval.
\end{proposition}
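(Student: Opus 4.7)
The plan is to observe that both claims are essentially immediate consequences of the additivity of probability, applied to the event $\{\mathrm{ITE}\in \mathbb{I}\}$ and its complement. The only real subtlety is bookkeeping about whether ``valid'' is interpreted with respect to a single distribution over $(Y_0, Y_1)$ or uniformly over the class of joint distributions compatible with the observed marginals; the argument is the same in both readings, applied pointwise.

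For the first claim I would argue by contrapositive. Suppose for contradiction that $\mathbb{I}\cap A=\emptyset$. Then the events $\{\mathrm{ITE}\in\mathbb{I}\}$ and $\{\mathrm{ITE}\in A\}$ are disjoint, so
\[
P(\mathrm{ITE}\in\mathbb{I}) + P(\mathrm{ITE}\in A) \leq 1.
\]
Since $\mathbb{I}$ is a valid $(1-\alpha)$ prediction interval, $P(\mathrm{ITE}\in\mathbb{I})\geq 1-\alpha$, which forces $P(\mathrm{ITE}\in A)\leq \alpha$, contradicting the assumption $P(\mathrm{ITE}\in A)>\alpha$. Hence $\mathbb{I}\cap A\neq\emptyset$.

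For the second claim I would exhibit $\mathbb{R}\setminus A$ as the desired prediction set. By complementation, $P(\mathrm{ITE}\in\mathbb{R}\setminus A)=1-P(\mathrm{ITE}\in A)\geq 1-\alpha$, so $\mathbb{R}\setminus A$ is itself a valid $(1-\alpha)$ prediction set, and trivially does not intersect $A$. The in-particular clause is then immediate: if $\mathbb{R}\setminus A$ happens to be an interval, it automatically qualifies as a valid prediction \emph{interval}, not merely a prediction set.

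The main thing to watch is that, under partial identification, ``valid'' should be read as ``valid uniformly over every joint distribution of $(Y_0,Y_1)$ consistent with the observed marginals.'' The above reasoning is unaffected: in part one, if $P(\mathrm{ITE}\in A)>\alpha$ under some consistent joint, then the disjointness argument under that specific joint exhibits a distribution under which $\mathbb{I}$ fails to cover with probability $1-\alpha$, breaking uniform validity; in part two, if $P(\mathrm{ITE}\in A)\leq\alpha$ holds for every consistent joint, then $P(\mathrm{ITE}\in \mathbb{R}\setminus A)\geq 1-\alpha$ likewise holds for every consistent joint, so $\mathbb{R}\setminus A$ is uniformly valid. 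There is no genuine obstacle; the proposition is a clean logical consequence of the definition, included to fix intuition for the constructions that follow.
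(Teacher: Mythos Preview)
Your proof is correct and matches the paper's approach almost exactly: both parts use disjointness/complementation to reduce to the definition of validity, and the paper's proof of the second claim likewise just exhibits $\mathbb{R}\setminus A$. Your added remark about the partial-identification reading is extra commentary not present in the paper's short proof, but it is accurate and does no harm.
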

\begin{proof}
    If $P(\mathrm{ITE}\in A)> \alpha$ and $\mathbb{I}\cap A = \emptyset$, then $P(\mathrm{ITE}\in\mathbb{I})\leq 1-P(\mathrm{ITE}\in A)<1-\alpha$. The interval $\mathbb{I}$ does not have $1-\alpha$ coverage. If $P(\mathrm{ITE}\in A)\leq \alpha$, then $P(\mathrm{ITE}\in (\mathbb{R}\setminus A))= 1- P(\mathrm{ITE}\in A)\geq 1-\alpha$. $\mathbb{R}\setminus A$ is a valid $(1-\alpha)$ prediction set.
\end{proof}
\begin{corollary}
    In discrete settings, whenever $P(ITE=i)>\alpha$, we must have $i\in \mathbb{I}$.  
\end{corollary}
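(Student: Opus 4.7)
The plan is to deduce this corollary as an immediate specialization of the preceding proposition applied to the singleton set $A = \{i\}$. Since we are in a discrete setting, $\{i\}$ is a well-defined (measurable) set and the quantity $P(\mathrm{ITE} = i) = P(\mathrm{ITE} \in \{i\})$ is exactly the kind of event the proposition describes. The hypothesis $P(\mathrm{ITE} = i) > \alpha$ then matches the first bullet of the proposition, with $A = \{i\}$.

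First I would write: let $\mathbb{I}$ be any valid $(1-\alpha)$ prediction interval for the ITE, and let $i$ be a point with $P(\mathrm{ITE} = i) > \alpha$. Setting $A = \{i\}$ in the proposition gives $\mathbb{I} \cap \{i\} \neq \emptyset$, which is exactly the statement $i \in \mathbb{I}$. No further calculation is needed, and there is no real obstacle here; the only thing to flag is that the appeal to singletons is what forces the ``discrete'' qualifier, since in a continuous setting the analogous condition $P(\mathrm{ITE} = i) > \alpha$ would be vacuous. This makes the corollary essentially a restatement of the proposition for atoms of the ITE distribution.
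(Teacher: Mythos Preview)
Your proposal is correct and is exactly the intended derivation: the corollary is an immediate specialization of the preceding proposition with $A=\{i\}$, and the paper does not supply any further argument beyond stating it as a corollary.
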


\subsection{Binary Treatment and Outcome Model}

To further narrow the discussion, we first consider the case in which the treatment and response are both binary. Following  \cite{copas1973randomization} who characterizes individual patients in the binary treatment and outcome model, we have the following four types and individual treatment effects:
\begin{center}
\begin{tabular}{cccc}
$Y_{0}$	&$Y_{1}$	&ITE & Type\\
\hline
0	&0	&0	& Never Recover (NR)\\
0	&1	&1	&Helped (HE)\\
1	&0	&-1	&Hurt (HU)\\
1	&1	&0	&Always Recover (AR) / Immune \\
\end{tabular}
\end{center}
Notice that in this setting the individual treatment effects take three possible values: $-1,0,1$.

 In the simple setting that we consider, since there are only $3$ possible values taken by the ITE, there are $6$ possible prediction intervals:
\begin{equation*}
\{-1\},\;\; 
\{0\},\;\; 
\{1\},\;\; 
[-1,0],\;\; 
[0,1],\;\; 
[-1,1]. 
\end{equation*}

Note that prediction intervals $[-1,0],[0,1],[-1, 1]$ here are sets correspond to $\{-1,0\}, \{0,1\}, \{-1,0,1\}$. Singleton sets can be viewed as degenerate intervals where the starting point equals to the end point. In this simple setting, there is only one prediction set that does not correspond to an interval, namely $\{-1,1\}$. We discussed it in Remark \ref{re:-1_1_not_possible}.

\subsubsection{When will the valid prediction interval of minimal length for a given joint distribution not be unique?}
\begin{corollary}\label{cor:non-unique}
If $P(HE\cup HU)>\alpha$ and $\max\{P(HE),P(HU)\}\leq \alpha$, then $\{0\}$ is not a valid $(1-\alpha)\%$ prediction interval but both $[-1,0]$ and $[0,1]$ are valid and minimal length $(1-\alpha)\%$ prediction intervals for the ITE.
\end{corollary}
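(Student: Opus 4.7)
The plan is to translate everything into probabilities of the three ITE values, then verify each of the three assertions (invalidity of $\{0\}$, validity of $[-1,0]$ and $[0,1]$, and minimality) by direct computation using only the two hypotheses together with the standing assumption $\alpha < 1/2$.

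First I would record the correspondence between the Copas types and the ITE pmf: $P(\mathrm{ITE}=1)=P(HE)$, $P(\mathrm{ITE}=-1)=P(HU)$, and $P(\mathrm{ITE}=0)=P(NR)+P(AR)=1-P(HE\cup HU)$, since $HE$ and $HU$ are disjoint. With this dictionary, the assumption $P(HE\cup HU)>\alpha$ gives $P(\mathrm{ITE}=0)<1-\alpha$, which is exactly the statement that $\{0\}$ fails to be a valid $(1-\alpha)$ prediction interval. Equivalently, by the contrapositive of the first assertion of the preceding Proposition (applied with $A=\{-1,1\}$), the fact that $P(\mathrm{ITE}\in A)>\alpha$ forces $0$ alone to miss mass exceeding $\alpha$.

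Next I would verify the two candidate intervals. For $[-1,0]$, additivity gives
\[
P(\mathrm{ITE}\in[-1,0])=1-P(\mathrm{ITE}=1)=1-P(HE)\ge 1-\alpha
\]
by the second hypothesis $P(HE)\le\alpha$. The symmetric calculation using $P(HU)\le\alpha$ shows $P(\mathrm{ITE}\in[0,1])\ge 1-\alpha$, so both are valid prediction intervals of length $1$.

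It remains to argue minimality. Any prediction interval strictly shorter than $[-1,0]$ or $[0,1]$ on the support $\{-1,0,1\}$ must be a singleton $\{-1\}$, $\{0\}$, or $\{1\}$. The singleton $\{0\}$ has already been ruled out. For $\{-1\}$ we have $P(\mathrm{ITE}=-1)=P(HU)\le\alpha$, and since the paper assumes $\alpha$ is bounded away from $1/2$ (in particular $\alpha<1-\alpha$), this yields $P(\mathrm{ITE}=-1)<1-\alpha$; the analogous bound rules out $\{1\}$. Hence $[-1,0]$ and $[0,1]$ are both minimal-length valid prediction intervals.

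The argument is essentially bookkeeping; the only place one has to be careful is invoking $\alpha<1/2$ to rule out the endpoint singletons, since otherwise $P(HU)\le\alpha$ would not by itself preclude $\{-1\}$ from achieving $(1-\alpha)$ coverage. I do not anticipate any genuine obstacle beyond being explicit about this use of the standing assumption on $\alpha$.
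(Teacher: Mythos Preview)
Your argument is correct and matches the paper's own (very brief, informal) justification: the paper simply observes that $\{0\}$ fails coverage while $[-1,0]$ and $[0,1]$ each capture at least $1-\alpha$ of the mass, without writing out the explicit computations you provide. Your treatment is more complete in that you explicitly rule out the singletons $\{-1\}$ and $\{1\}$ using the standing assumption $\alpha<1/2$, a step the paper leaves implicit.
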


Under the conditions stated in Corollary~\ref{cor:non-unique}, the set $\{0\}$ fails to provide sufficient coverage; yet each of $[-1,0]$ and $[0,1]$ captures a sufficiently large portion of the probability mass (at least $1-\alpha$).  Both of these intervals therefore qualify as valid $(1-\alpha)\%$ prediction intervals for $\mathrm{ITE}$, and each achieves the same minimal length (one unit).  Consequently, there is no single ``shortest'' interval that strictly dominates the other. In general, even given the joint distribution of $Y_0, Y_1$, there could be multiple minimal-length valid prediction intervals.

\subsubsection{Partial Identification of the Joint Distribution over Types under Randomization}

Under randomization, the relationship between the counterfactual distribution
$P(Y_0, Y_1)$ and the observed distributions $\left\{P(Y\mid D=0), P(Y\mid D=1)\right\}$ 
 is given by this table:
 
\begin{center}
\begin{tabular}{r|cc}
 & $P(Y\!=\!0\mid D\!=\!0)$ & $P(Y\!=\!1\mid D\!=\!0)$\\[4pt]
 \hline
 &\\[-8pt]
 $P(Y\!=\!0\mid D\!=\!1)$ & $P( Y_0\!=\!0, Y_1\!=\!0)$ & $P( Y_0\!=\!1, Y_1\!=\!0)$\\[4pt]

 $P(Y\!=\!1\mid D\!=\!1)$ & $P( Y_0\!=\!0, Y_1\!=\!1)$ & $P( Y_0\!=\!1, Y_1\!=\!1)$\\[4pt]
\end{tabular}
\end{center}
\medskip

Here $P(Y\!=\!i\mid D\!=\!j) = P( Y_j\!=\!i)$ due to randomization.

\bigskip

Equivalently we may write this in terms of types:

\begin{center}
\begin{tabular}{r|cc}
 & $P(Y\!=\!0\mid D\!=\!0)$ & $P(Y\!=\!1\mid D\!=\!0)$\\[4pt]
 \hline
 &\\[-8pt]
 $P(Y\!=\!0\mid D\!=\!1)$ & $P($NR$)$ & $P($HU$)$ \\[4pt]
  $P(Y\!=\!1\mid D\!=\!1)$ & $P($HE$)$ &$P($AR$)$ \\[4pt]
\end{tabular}
\end{center}

\bigskip
\begin{proposition}[Fr\'{e}chet inequality bounds]\label{def:frechet_ineq}

 For two real valued random variables $Y_0, Y_1$ and any $(y_0,y_1)\in \mathbb{R}^2$, suppose that we know $P(Y_0=y_0)=a$ and $P(Y_1=y_1)=b$, then
\begin{align*}
\max\{0,a+b-1\}\leq P(Y_0=y_0,Y_1=y_1)\leq \min\{a,b\}
\end{align*}
This is also known as the Boole-Fr\'{e}chet inequality.
\end{proposition}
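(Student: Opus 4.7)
The plan is to verify each of the two bounds separately, treating them as consequences of the elementary axioms of probability applied to the events $A = \{Y_0 = y_0\}$ and $B = \{Y_1 = y_1\}$, for which $P(A) = a$ and $P(B) = b$ by hypothesis. The target probability $P(Y_0 = y_0, Y_1 = y_1)$ is then just $P(A \cap B)$, so no joint distributional assumptions need to enter.

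For the upper bound I would invoke monotonicity of probability. Since $A \cap B \subseteq A$ and $A \cap B \subseteq B$, monotonicity gives $P(A \cap B) \le a$ and $P(A \cap B) \le b$ simultaneously, and hence $P(A \cap B) \le \min\{a,b\}$. This step is essentially one line and requires no more than monotonicity of a measure.

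For the lower bound I would use inclusion--exclusion: $P(A \cup B) = P(A) + P(B) - P(A \cap B) = a + b - P(A \cap B)$. Combining this with the fact that every probability is bounded above by $1$ gives $a + b - P(A \cap B) \le 1$, i.e., $P(A \cap B) \ge a + b - 1$. Pairing this with the non-negativity of probability yields $P(A \cap B) \ge \max\{0, a + b - 1\}$, completing the bound.

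I do not foresee any genuine obstacle here: both inequalities reduce to monotonicity, non-negativity, and finite additivity, and nothing about the joint law of $(Y_0, Y_1)$ beyond the two marginal masses is used. If sharpness were needed later in the paper, one would additionally exhibit couplings that saturate each side (comonotonic for the upper, countermonotonic for the lower), but the statement as written asserts only the inequalities, so the short argument above suffices.
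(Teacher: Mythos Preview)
Your proposal is correct. The paper itself does not supply a proof of this proposition; it is stated as a known classical fact (the Boole--Fr\'echet inequality) and used without further justification. Your elementary argument via monotonicity for the upper bound and inclusion--exclusion plus the bound $P(A\cup B)\le 1$ for the lower bound is exactly the standard way to establish it, and nothing more is required here.
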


Since under randomization the joint distribution must add up to satisfy the observed distributions in the two treatment arms, we can parametrize the distribution of types using $P(AR)=t$. We then have the following solution set:

\begin{equation*}\label{eq:jointwithmargins}
\left\{ \begin{array}{ccl}
P(\hbox{AR})&=&{\textcolor{red}{ t}},\\
P(\hbox{HU})&=&P(Y\!=\!1 \mid D\!=\!0)-{\textcolor{red}{ t}},\\
P(\hbox{HE})&=&P(Y\!=\!1 \mid D\!=\!1)-{\textcolor{red}{ t}},\\
P(\hbox{NR}) &=& 1-P(Y\!=\!1 \mid D\!=\!0)-P(Y\!=\!1 \mid D\!=\!1) +{\textcolor{red}{ t}},
\end{array}\!\!\!\
\right\},
\end{equation*}
where, by Proposition \ref{def:frechet_ineq}, we have
\begin{align*}
{\textcolor{red}{ t}} \geq \max\left\{0,(P(Y\!=\!1 \mid D\!=\!0)+P(Y\!=\!1 \mid D\!=\!1)) - 1\right\},\\
{\textcolor{red}{ t}} \leq \min\left\{P(Y\!=\!1 \mid D\!=\!0),P(Y\!=\!1 \mid D\!=\!1)\right\}.
\end{align*}

\subsection{When is the only valid prediction interval trivial?\label{trivial_pred}}

There are circumstances in which the only valid prediction interval for the individual treatment effect is the trivial interval: $[-1,1]$!

The interval will be trivial when the set of people of type Hurt can be larger than $\alpha$, {\em and} the set of people of type Helped  can also be larger than $\alpha$.

Notice that the proportion Helped and Hurt, {\em both } achieve their maximum value when the proportion Always Recover 
${\textcolor{red}{ t}}$ achieves its minimum value. Hence the only valid prediction interval will be trivial whenever we have both:
\begin{align*}
P(Y\!=\!1 \mid D\!=\!0) - \max\left\{0,(P(Y\!=\!1 \mid D\!=\!0)+P(Y\!=\!1 \mid D\!=\!1)) - 1\right\}>\alpha;\\
P(Y\!=\!1 \mid D\!=\!1) - \max\left\{0,(P(Y\!=\!1 \mid D\!=\!0)+P(Y\!=\!1 \mid D\!=\!1)) - 1\right\} > \alpha.
\end{align*}
this may be equivalently expressed as:
\begin{align*}
 \min\left\{P(Y\!=\!1 \mid D\!=\!0),1-P(Y\!=\!1 \mid D\!=\!1)\right\} &> \alpha;\\
 \min\left\{P(Y\!=\!1 \mid D\!=\!1),1-P(Y\!=\!1 \mid D\!=\!0)\right\} &> \alpha.
\end{align*}

Consequently, provided we have:
\begin{align*}
\alpha < \min\{P(Y\!=\!1 \mid D\!=\!0), P(Y\!=\!1 \mid D\!=\!1)\}
\end{align*}
and
\begin{align*}
\max\{P(Y\!=\!1 \mid D\!=\!0), P(Y\!=\!1 \mid D\!=\!1)\}< (1-\alpha)
\end{align*}
then the only valid prediction interval will be trivial.
In concrete terms, if $\alpha=0.05$ and the proportions recovering under treatment ($D=1$) and control ($D=0$) both
lie between $5\%$ and $95\%$ then, given that we know the conditional distributions
$P(Y|D)$,  the only valid 95\% prediction interval for the individual treatment effect will be $[-1,1]$.

These results show that for randomized experiments in which the proportion of recovery in both arms lies between $\alpha$ and $(1-\alpha)$, the observed data is entirely uninformative regarding the ITE.

\subsection{When is the valid prediction interval for the individual treatment effect a singleton?\label{singleton}}
Notwithstanding the results in the previous section, perhaps surprisingly, there are situations in which a valid $(1-\alpha)\%$ prediction interval is a singleton. We now characterize when this occurs. Detailed derivations are provided in the Appendix~\ref{app:when_valid}. There are three cases to consider:
\begin{itemize}
    \item $\{0\}$ is valid if the sum of AR and NR types is at least $1-\alpha$, i.e., both arms have nearly $\alpha\%$ or nearly $1-\alpha\%$ response such that the sum of the proportions of individuals across the two arms with the less common outcome is less than $\alpha$. Concretely, either
\begin{align*}
P(Y\!=\!1 \mid D\!=\!0)+ P(Y\!=\!1 \mid D\!=\!1)\leq \alpha
\end{align*}
or 
\begin{align*}
P(Y\!=\!0 \mid D\!=\!0)+ P(Y\!=\!0 \mid D\!=\!1)\leq \alpha.
\end{align*}
Note however, that in such a case the average treatment effect, though less than $\alpha$, may be non-zero if $P(Y=0\mid D=0)\neq P(Y=0\mid D=1)$. 
Thus, given sufficiently large sample sizes, confidence intervals for the Average Treatment Effect will not include zero. We note that the singleton ITE prediction interval $\{0\}$ is equivalent to establishing the Fisherian sharp null hypothesis \citep{fisher1936design} holds for at least $(1-\alpha)$ of the population. We will further discuss this in section~\ref{sec:discuss_ate_ite}. 
    \item $\{1\}$ is valid if the HE type alone can exceed $1-\alpha$. This requires
    \begin{align*}
    P(Y=1\mid D=1) - P(Y=1\mid D=0) \geq (1-\alpha).
    \end{align*}
    meaning the average treatment effect is at least $1-\alpha$.
    \item $\{-1\}$ is valid if the HU type alone can exceed $1-\alpha$. This requires
    \begin{align*}
    P(Y=1\mid D=0) - P(Y=1\mid D=1) \geq (1-\alpha).
    \end{align*}
    i.e., the average treatment effect is less than $-(1-\alpha)$.
\end{itemize}
\subsection{When is the valid prediction interval for the individual treatment effect non-negative/non-positive?\label{non_neg_pos}}
Following the previous result, one can rule out negative (or positive) treatment effects if the proportion of HU (or HE) can never exceed $\alpha$. This occurs when either $P(Y=1 \mid D=0)$ or $1 - P(Y=1 \mid D=1)$ is below $\alpha$ (and similarly for ruling out positive effects). See Appendix~\ref{app:when_valid}.

\begin{remark}[]\label{re:-1_1_not_possible}
It is not possible to conclude $\{-1,1\}$ as a best prediction set given the observed marginals $P(Y=1\mid D=1), P(Y=1\mid D=0)$. For  $\{-1,1\}$ to be a valid prediction set, the proportion of people of type Helped plus the proportion of people of type Hurt should always be greater than or equal to $1-\alpha$. From the lower bounds on proportion of people of type Helped and the proportion of people of type Hurt, we need
\begin{equation*}
\begin{aligned}
&P(Y=1\mid D=1)+P(Y=1\mid D=0)
\\&-2\min\{P(Y=1\mid D=1),P(Y=1\mid D=0)\}\geq 1-\alpha,
\end{aligned}
\end{equation*}
equivalently, either
\begin{align*}
P(Y=1\mid D=1)-P(Y=1\mid D=0)\geq 1-\alpha
\end{align*}
or 
\begin{align*}
P(Y=1\mid D=0)-P(Y=1\mid D=1)\geq 1-\alpha.
\end{align*}
However, based on the result in Section \ref{singleton}, in either setting, we would just conclude the singleton $\{-1\}$ or $\{1\}$ respectively as the prediction set instead of $\{-1,1\}$. Therefore, it is not possible to conclude $\{-1,1\}$ as a best prediction set from the observed marginals $P(Y=1\mid D=1), P(Y=1\mid D=0)$. Note that $\{-1,1\}$ is a theoretically possible prediction set which can be optimal for certain disributions over potential outcomes. It is just we will not be able to conclude it based on observed distribution from randomization.
\end{remark}

\subsection{Visual summary of results on valid ITE prediction intervals}
Based on the result in \ref{trivial_pred},  \ref{singleton}, and \ref{non_neg_pos}, under randomization and in the limit of a large sample size where we observe the true marginals $P(Y=1\mid D=0), P(Y=1\mid D=1)$ and their counterparts, we can characterize the corresponding ITE prediction intervals. These intervals are valid no matter the joint distribution over $Y_0$ and $Y_1$ (provided it is compatible with $P(Y\mid D)$. These intervals are also ``the best we can do" without additional assumptions or information, e.g. from a cross-over study. 

\begin{figure}[h!]
  \centering
  \includegraphics[width=0.65\linewidth]{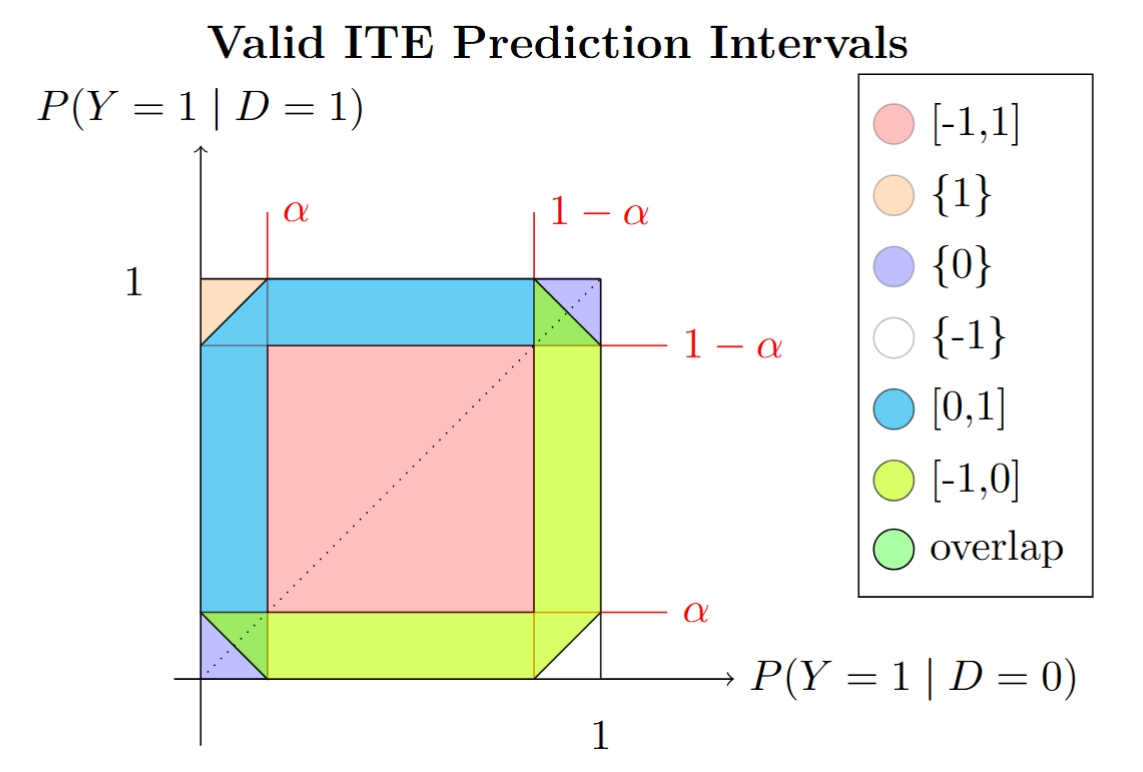}
\caption{Shortest ITE intervals given different marginal distributions.\label{F1}}
\end{figure}

Notice that we have two overlapping triangles each with area $\frac{1}{2}\alpha^2$ in Figure \ref{F1} where both $[0,1]$ and $[-1,0]$ can serve as valid $\alpha-$level prediction interval for individual treatment effects. Assume that for the same length of intervals, intervals with higher coverage are "better".  Then we can further decompose these triangular areas to obtain the "best" prediction intervals; see Figure \ref{F2}.

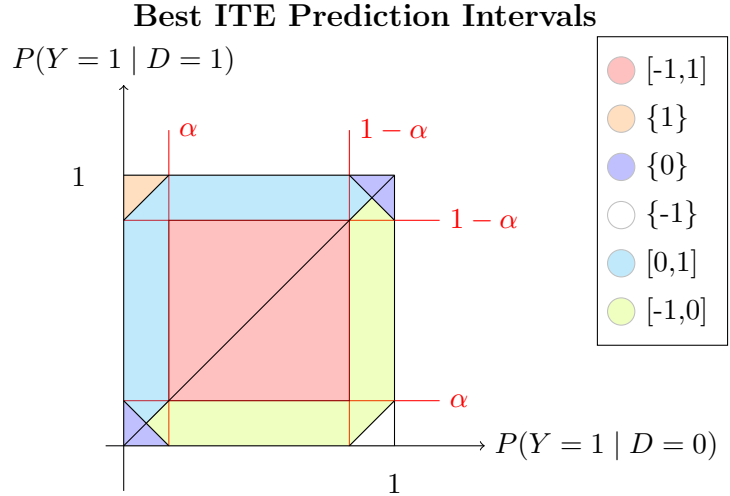
\begin{figure}[h!]
\centering
\begin{tikzpicture}[domain=0:3.5, scale=1.2]
  \draw[semithick]
 (-0.5,3.2) node[below]{$1$};
  \draw[semithick]
 (3,-0.2) node[below]{$1$};
  \draw[->] (-0.2,0) -- (4,0) node[right] {$P(Y=1\mid D=0)$};
  \draw[->] (0,-0.5) -- (0,4) node[above] {$P(Y=1\mid D=1)$};
  \draw[-] (3,0) -- (3,3) node[below] {};
  \draw[-] (0,3) -- (3,3) node[below] {};
  \draw[color=red]    plot (\x,0.5)             node[right] {$\alpha$};
  \draw[color=red]    plot (0.5,\x)             node[right] {$\alpha$};
  \draw[color=red]    plot (2.5,\x)             node[right] {$1-\alpha$};
  \draw[color=red]    plot (\x,2.5)             node[right] {$1-\alpha$};

  \draw[domain=0:0.5]    plot (\x,-\x+0.5)  node[below]{};
  \draw[domain=0:0.5]    plot (\x,\x+2.5)  node[below]{};
  \draw[domain=2.5:3]    plot (\x,-\x+5.5)  node[below]{};
  \draw[domain=2.5:3]    plot (\x,\x-2.5)  node[below]{};
  \draw[fill=red,nearly transparent]  (0.5,0.5) -- (0.5,2.5) -- (2.5,2.5) -- (2.5,0.5) --(0.5,0.5) -- cycle;
  \draw[fill=blue,nearly transparent]  (0,0) -- (0.5,0)-- (0,0.5) -- cycle;
  \draw[fill=blue,nearly transparent]  (2.5,3) -- (3,2.5)-- (3,3) -- cycle;
  \draw[fill=white,nearly transparent]  (2.5,0) -- (3,0)-- (3,0.5) -- cycle;
  \draw[fill=orange,nearly transparent]  (0,3) -- (0,2.5)-- (0.5,3) -- cycle;
\draw[fill=cyan,nearly transparent] (0,0.5)--(0.25,0.25)--  (0.5,0.5) -- (0.5,2.5) -- (2.5,2.5) -- (2.75,2.75)-- (2.5,3)--(0.5,3)--(0,2.5) -- cycle;
\draw[fill=lime,nearly transparent] (0.5,0.5) --(0.25,0.25) -- (0.5,0)-- (2.5,0)--(3,0.5) --(3,2.5)--(2.75,2.75)--(2.5,2.5)--(2.5,0.5) -- cycle;
  \draw[domain=0:3]    plot (\x,\x)  node[below]{};
  \matrix [draw,below left] at (current bounding box.north east) {
    \node [shape=circle, draw=black,fill=red,nearly transparent,label=right:{[-1,1]}] {}; \\
  \node [shape=circle, draw=black,fill=orange,nearly transparent,label=right:{\{1\}}] {}; \\
    \node [shape=circle, draw=black,fill=blue,nearly transparent,label=right:{\{0\}}] {}; \\
    \node [shape=circle, draw=black,fill=white,nearly transparent,label=right:{\{-1\}}] {}; \\
    \node [shape=circle, draw=black,fill=cyan,nearly transparent,label=right:{[0,1]}] {}; \\
    \node [shape=circle, draw=black,fill=lime,nearly transparent,label=right:{[-1,0]}] {}; \\
};
\node[above,font=\large\bfseries] at (current bounding box.north) {Best ITE Prediction Intervals};
\end{tikzpicture}
\caption{The prediction interval with the highest coverage among those that are valid and have minimal length.}\label{F2}
\end{figure}
\subsection{Necessary conditions for a given prediction interval to be valid and of minimal length}\label{subsec:binary_necessary}
Suppose we are given a prediction interval and an observed distribution from an RCT, we can consider when does there exist a joint distribution over potential outcomes that is compatible with the observed distributions and for the prediction interval to be valid and of minimal length. 
In Appendix~\ref{app:necessary_conditions}, we derive the precise constraints on the marginal distributions 
\(\,P(Y=1 \mid D=0)\) and \(P(Y=1 \mid D=1)\) 
that guarantee each of the intervals can serve as a valid (or “best”) prediction interval for the individual treatment effect for some joint distribution over potential outcome.
The necessary conditions take the form of simple inequalities relating the two response probabilities; they characterize scenarios in which each the true treatment effect with probability at least \(1-\alpha\) lies within the given set for som population compatible with $P(Y\mid D)$. Full details, derivations, and proofs appear in Appendix~\ref{app:necessary_conditions}.

\subsection{Visualization of Necessary Conditions given an ITE prediction interval to be the best}
Figure \ref{figure:marginal1} and \ref{figure:marginal2} provide visualizations of the necessary conditions on $P(Y=1\mid D=0)$ and $P(Y=1\mid D=1)$ for a given prediction interval to be valid/best. Note that if for a given point in the unit square, we consider the intervals (from Figure~\ref{figure:marginal1} and \ref{figure:marginal2}) containing $(P(Y=1\mid D=1), P(Y=1\mid D=0))$ and then select the longest, we recover Figure~\ref{F2}.

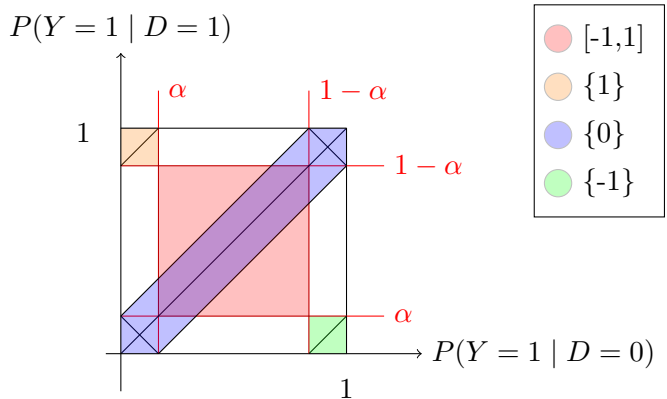
\begin{figure}[h!]
\centering
\begin{tikzpicture}[domain=0:3.5]
  \draw[semithick]
 (-0.5,3.2) node[below]{$1$};
  \draw[semithick]
 (3,-0.2) node[below]{$1$};
  \draw[->] (-0.2,0) -- (4,0) node[right] {$P(Y=1\mid D=0)$};
  \draw[->] (0,-0.5) -- (0,4) node[above] {$P(Y=1\mid D=1)$};
  \draw[-] (3,0) -- (3,3) node[below] {};
  \draw[-] (0,3) -- (3,3) node[below] {};
  \draw[color=red]    plot (\x,0.5)             node[right] {$\alpha$};
  \draw[color=red]    plot (0.5,\x)             node[right] {$\alpha$};
  \draw[color=red]    plot (2.5,\x)             node[right] {$1-\alpha$};
  \draw[color=red]    plot (\x,2.5)             node[right] {$1-\alpha$};

  \draw[domain=0:0.5]    plot (\x,-\x+0.5)  node[below]{};
  \draw[domain=0:3]    plot (\x,\x)  node[below]{};
  \draw[domain=0:2.5]    plot (\x,\x+0.5)  node[below]{};
  \draw[domain=0.5:3]    plot (\x,\x-0.5)  node[below]{};
  \draw[domain=0:0.5]    plot (\x,\x+2.5)  node[below]{};
  \draw[domain=2.5:3]    plot (\x,-\x+5.5)  node[below]{};
  \draw[domain=2.5:3]    plot (\x,\x-2.5)  node[below]{};
  \draw[fill=red,nearly transparent]  (0.5,0.5) -- (0.5,2.5) -- (2.5,2.5) -- (2.5,0.5) --(0.5,0.5) -- cycle;
   \draw[fill=blue,nearly transparent]  (0,0) -- (0.5,0)--(3,2.5)--(3,3)--(2.5,3)-- (0,0.5) -- cycle;
  \draw[fill=green,nearly transparent]  (2.5,0) -- (3,0)-- (3,0.5)--(2.5,0.5) -- cycle;
  \draw[fill=orange,nearly transparent]  (0,3) -- (0,2.5)--(0.5,2.5)--  (0.5,3) -- cycle;
  \matrix [draw,below left] at (current bounding box.north east) {
    \node [shape=circle, draw=black,fill=red,nearly transparent,label=right:{[-1,1]}] {}; \\
  \node [shape=circle, draw=black,fill=orange,nearly transparent,label=right:{\{1\}}] {}; \\
    \node [shape=circle, draw=black,fill=blue,nearly transparent,label=right:{\{0\}}] {}; \\
    \node [shape=circle, draw=black,fill=green,nearly transparent,label=right:{\{-1\}}] {}; \\
};
\end{tikzpicture}
\caption{Necessary condition on the marginal distributions for a given prediction interval, respectively, $[-1,1], \{1\}, \{0\}, \{-1\}$, to be valid and best for some joint distribution $P(Y_0,Y_1)$ compatible with $P(Y\mid D)$; see also \ref{figure:marginal2}.\label{figure:marginal1}}
\end{figure}

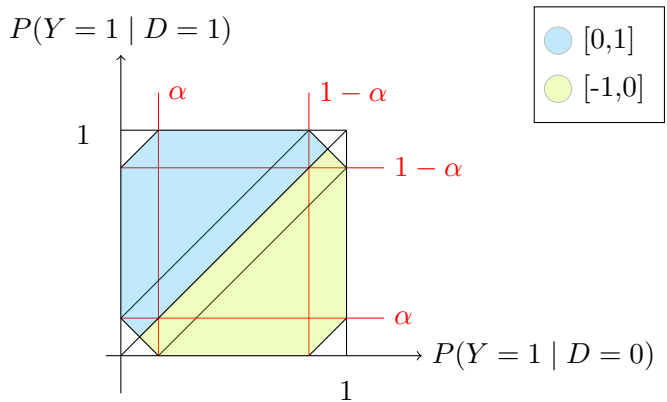
\begin{figure}[h!]
\centering
\begin{tikzpicture}[domain=0:3.5]
  \draw[semithick]
 (-0.5,3.2) node[below]{$1$};
  \draw[semithick]
 (3,-0.2) node[below]{$1$};
  \draw[->] (-0.2,0) -- (4,0) node[right] {$P(Y=1\mid D=0)$};
  \draw[->] (0,-0.5) -- (0,4) node[above] {$P(Y=1\mid D=1)$};
  \draw[-] (3,0) -- (3,3) node[below] {};
  \draw[-] (0,3) -- (3,3) node[below] {};
  \draw[color=red]    plot (\x,0.5)             node[right] {$\alpha$};
  \draw[color=red]    plot (0.5,\x)             node[right] {$\alpha$};
  \draw[color=red]    plot (2.5,\x)             node[right] {$1-\alpha$};
  \draw[color=red]    plot (\x,2.5)             node[right] {$1-\alpha$};

  \draw[domain=0:0.5]    plot (\x,-\x+0.5)  node[below]{};
  \draw[domain=0:3]    plot (\x,\x)  node[below]{};
  \draw[domain=0:2.5]    plot (\x,\x+0.5)  node[below]{};
  \draw[domain=0.5:3]    plot (\x,\x-0.5)  node[below]{};
  \draw[domain=0:0.5]    plot (\x,\x+2.5)  node[below]{};
  \draw[domain=2.5:3]    plot (\x,-\x+5.5)  node[below]{};
  \draw[domain=2.5:3]    plot (\x,\x-2.5)  node[below]{};
\draw[fill=cyan,nearly transparent] (0,0.5)--(0.25,0.25) -- (2.75,2.75) --  (2.5,3)--(0.5,3)--(0,2.5) -- cycle;
\draw[fill=lime,nearly transparent]  (0.25,0.25) -- (0.5,0)-- (2.5,0)--(3,0.5) --(3,2.5)--(2.75,2.75) -- cycle;
  \matrix [draw,below left] at (current bounding box.north east) {
    \node [shape=circle, draw=black,fill=cyan,nearly transparent,label=right:{[0,1]}] {}; \\
    \node [shape=circle, draw=black,fill=lime,nearly transparent,label=right:{[-1,0]}] {}; \\
};
\end{tikzpicture}
\caption{Necessary condition on the marginal distributions for a given prediction interval $[0,1], [-1,0]$ to be valid and best for some joint distribution $P(Y_0,Y_1)$ compatible with $P(Y\mid D)$. \label{figure:marginal2}}
\end{figure}

\section{Beyond binary outcomes}\label{sec:beyond_binary}
We now turn to the case where the outcome is continuous, rather than binary. Our goal is to address questions analogous to those explored in the binary setting. In particular, we begin by investigating how to construct valid prediction intervals based solely on the marginal distributions, without imposing assumptions on the joint distribution of the potential outcomes. We then examine the conditions under which these intervals can be bounded away from zero. Further, we solve the problem: given a prediction interval (or set), what conditions on the observed data ensure the existence of a consistent joint distribution under which the interval is valid?
\subsection{A conservative interval for continuous outcomes}\label{subsec:conti_pi}
The marginal distributions of $Y_1$ and $Y_0$ are identified from randomized experiments.
Let $[L_0, R_0]$ be an interval such that $P\left(Y_0\in [L_0,R_0]\right) \geq 1-\alpha/2.$ Let $[L_1, R_1]$ be an interval such that $P\left(Y_1\in [L_1,R_1]\right) \geq 1-\alpha/2.$ Then
\begin{align}
   P\left((Y_1-Y_0) \in [L_1-R_0, R_1-L_0]\right) \geq 1-\alpha. \label{eq:conservative_int}
\end{align}
\begin{proof}
    Note that if $Y_1-Y_0$ is not in $[L_1-R_0, R_1-L_0]$, then either $Y_1$ is not in $[L_1, R_1]$ or $Y_0$ is not in $[L_0,R_0]$. Put into set notation,
\[
\{Y_1 - Y_0 \notin [L_1 - R_0,\; R_1 - L_0] \}
\subseteq
\{Y_1 \notin [L_1,R_1]\} 
\cup
\{Y_0 \notin [L_0,R_0]\}.
\]
Applying the union bound to that set containment gives
\[
P\Bigl(Y_1 - Y_0 \notin [L_1 - R_0, R_1 - L_0]\Bigr)
\leq
P\bigl(Y_1 \notin [L_1,R_1]\bigr) + P\bigl(Y_0\notin [L_0,R_0]\bigr).
\]
By hypothesis, $P\left(Y_0\notin [L_0,R_0]\right) < \alpha/2$ and $P\left(Y_1\notin [L_1,R_1]\right) < \alpha/2$, therefore,
\[P\bigl(Y_1-Y_0 \notin [L_1,R_1]\bigr) < \alpha.\]
Thus,
$$P\left((Y_1-Y_0) \in [L_1-R_0, R_1-L_0]\right) \geq 1-\alpha.$$
\end{proof}
This result is similar to the ``naive" conformal inference prediction interval for the ITE given in \cite{lei2021conformal} Section 4.1. Though we further assume an infinite sample size and do not use covariates. 
\subsection{Points that must be included in every valid $(1-\alpha)$ prediction interval}\label{subsec:points_include}
One might argue that the bounds in (\ref{eq:conservative_int}) are too conservative. We will take a different perspective to see what are the points that must be included in the interval. Note that to be valid, an ITE prediction interval must be valid for all joint distributions consistent with the observed data, and hence will in general be wider than that resulting from knowledge of this joint distribution. Let $[L_0', R_0']$ be an interval such that 
\[L_0' 
:=
\min \Bigl\{\ell \in \mathbb{R} : P\bigl(Y_0 < \ell\bigr) > \alpha\Bigr\}\]
and
\[R_0' 
:=
\max \Bigl\{\ell \in \mathbb{R} : P\bigl(Y_0 > \ell\bigr) > \alpha\Bigr\}.\] In other words, $L'_0$ and $R'_0$ are the $\alpha$-quantile and $(1-\alpha)$-quantile of $Y_i(0)$. Similarly we can define
\[L_1' 
:=
\min \Bigl\{\ell \in \mathbb{R} : P\bigl(Y_1 < \ell\bigr) > \alpha\Bigr\}\]
and
\[R_1' 
:=
\max \Bigl\{\ell \in \mathbb{R} : P\bigl(Y_1 > \ell\bigr) > \alpha\Bigr\}.\]
Then a valid $(1-\alpha)$ prediction interval for the ITE must include these points: 
\begin{itemize}
  \item $R_1'-L_0'$,
  \item $L_1'-R_0'$.
\end{itemize}
This follows because otherwise there exists a joint distribution of $Y_1, Y_0$ such that there is more than $\alpha$ mass outside of the prediction interval. This is because the only constraint imposed on the joint distribution by the marginals is given by the Fréchet inequalities,
\[P(Y_1> R_1', Y_0<L_0')\leq \min\{P(Y_1> R_1'), P(Y_0<L'_0)\}\]
By construction the minimum is greater than $\alpha$, meaning there exists a joint distribution that $P(Y_1> R_1', Y_0<L_0')>\alpha$. Hence under this joint distribution, $P(Y_1-Y_0>R_1'-L_0')>\alpha$, therefore, any interval that does not include $R_1'-L_0'$ will not be a valid $\alpha$ level prediction interval. 

Even small tails in marginal distributions of $Y_1, Y_0$ can force the prediction interval to expand considerably at both extremes. We illustrate this in Figure~\ref{fig:continuous_y_case}.

\begin{figure}[h!]
    \centering
\resizebox{0.75\textwidth}{!}{
\tikzset{every picture/.style={line width=0.75pt}} 

\begin{tikzpicture}[x=0.75pt,y=0.75pt,yscale=-1,xscale=1]

\draw    (226,140.55) -- (465.7,140.5) ;
\draw    (226,140.55) -- (226.2,290) ;
\draw    (231.2,127.5) .. controls (356.7,72) and (331.2,-22.9) .. (462.2,129.1) ;
\draw    (207.5,284.6) .. controls (38.5,200.55) and (167.5,176.6) .. (207.5,146.6) ;
\draw [color={rgb, 255:red, 251; green, 1; blue, 1 }  ,draw opacity=1 ]   (447.2,76.1) -- (449.3,288.15) ;
\draw [color={rgb, 255:red, 251; green, 1; blue, 1 }  ,draw opacity=1 ]   (154.7,155.45) -- (496.3,155.15) ;
\draw    (203.8,125.15) -- (203.98,149.6) ;
\draw [shift={(204,151.6)}, rotate = 269.57] [color={rgb, 255:red, 0; green, 0; blue, 0 }  ][line width=0.75]    (10.93,-3.29) .. controls (6.95,-1.4) and (3.31,-0.3) .. (0,0) .. controls (3.31,0.3) and (6.95,1.4) .. (10.93,3.29)   ;
\draw    (468.3,95.15) -- (455.97,124.75) ;
\draw [shift={(455.2,126.6)}, rotate = 292.61] [color={rgb, 255:red, 0; green, 0; blue, 0 }  ][line width=0.75]    (10.93,-3.29) .. controls (6.95,-1.4) and (3.31,-0.3) .. (0,0) .. controls (3.31,0.3) and (6.95,1.4) .. (10.93,3.29)   ;
\draw [color={rgb, 255:red, 74; green, 144; blue, 226 }  ,draw opacity=1 ]   (452.8,141.15) -- (448.3,156.15) ;
\draw [color={rgb, 255:red, 74; green, 144; blue, 226 }  ,draw opacity=1 ]   (457.3,140.65) -- (452.8,155.65) ;
\draw [color={rgb, 255:red, 74; green, 144; blue, 226 }  ,draw opacity=1 ]   (461.3,141.15) -- (456.8,156.15) ;
\draw [color={rgb, 255:red, 74; green, 144; blue, 226 }  ,draw opacity=1 ]   (465.7,140.5) -- (461.2,155.5) ;
\draw [color={rgb, 255:red, 251; green, 1; blue, 1 }  ,draw opacity=1 ]   (156.2,269.45) -- (497.8,269.15) ;
\draw [color={rgb, 255:red, 251; green, 1; blue, 1 }  ,draw opacity=1 ]   (241.7,84.95) -- (243.8,297) ;
\draw [color={rgb, 255:red, 74; green, 144; blue, 226 }  ,draw opacity=1 ]   (243.2,270) -- (226.3,276.65) ;
\draw [color={rgb, 255:red, 74; green, 144; blue, 226 }  ,draw opacity=1 ]   (243.2,275.35) -- (226.3,282) ;
\draw [color={rgb, 255:red, 74; green, 144; blue, 226 }  ,draw opacity=1 ]   (243.7,279.85) -- (226.8,286.5) ;
\draw    (502.4,183.85) -- (450.08,157.06) ;
\draw [shift={(448.3,156.15)}, rotate = 27.11] [color={rgb, 255:red, 0; green, 0; blue, 0 }  ][line width=0.75]    (10.93,-3.29) .. controls (6.95,-1.4) and (3.31,-0.3) .. (0,0) .. controls (3.31,0.3) and (6.95,1.4) .. (10.93,3.29)   ;
\draw    (272.4,240.85) -- (244.62,268.59) ;
\draw [shift={(243.2,270)}, rotate = 315.05] [color={rgb, 255:red, 0; green, 0; blue, 0 }  ][line width=0.75]    (10.93,-3.29) .. controls (6.95,-1.4) and (3.31,-0.3) .. (0,0) .. controls (3.31,0.3) and (6.95,1.4) .. (10.93,3.29)   ;

\draw (174,130.65) node [anchor=north west][inner sep=0.75pt]  [font=\scriptsize]  {$-Y_{0}$};
\draw (472.5,115.65) node [anchor=north west][inner sep=0.75pt]  [font=\scriptsize]  {$Y_{1}$};
\draw (465.8,82.4) node [anchor=north west][inner sep=0.75pt]  [font=\scriptsize]  {$\textcolor[rgb]{0.97,0.02,0.02}{\alpha \%}$};
\draw (194.3,112.4) node [anchor=north west][inner sep=0.75pt]  [font=\scriptsize]  {$\textcolor[rgb]{0.97,0.02,0.02}{\alpha \%}$};
\draw (467.9,180.5) node [anchor=north west][inner sep=0.75pt]   [align=left] {{\scriptsize Need to include this value!}};
\draw (251.9,221) node [anchor=north west][inner sep=0.75pt]   [align=left] {{\scriptsize Need to include this value!}};

\end{tikzpicture}
}
\caption{Illustration of the continuous $Y$ case. In order to maintain a $1-\alpha$ coverage probability for the ITE, the prediction interval must include the key quantile differences on both the left and right tails of the outcome distributions.}
\label{fig:continuous_y_case}
\end{figure}
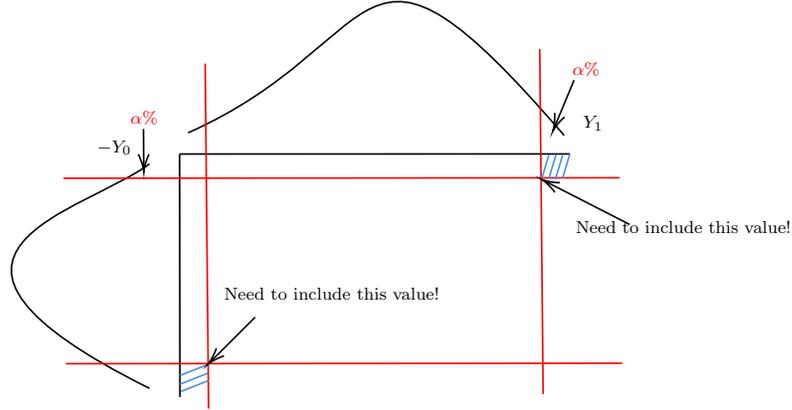

\subsection{Can we obtain a prediction interval bounded away from zero?}\label{subsec:away_0_continuous}
In the previous section, we gave conditions under which a point is included in every valid ITE prediction interval. Here we now give necessary and sufficient conditions for a valid interval to exclude the half line $(0,\infty)$ (or $(-\infty,0)$). Though the arguments extend to any constant, not just zero, we now address the question: can a valid prediction interval for the individual treatment effect (ITE) be bounded away from zero? Suppose there exists a joint distribution of the potential outcomes such that $P(Y_1 - Y_0 \leq 0) > \alpha$. Then, without further assumptions on the joint distribution, the left endpoint of any valid $(1-\alpha)$ prediction interval must be less than or equal to zero. This is closely related to Kolmogorov's problem and the results in \cite{fan2010sharp, zhang2024bounds}, which we review in Appendix~\ref{sec:sharp_CDF_bounds}. In particular, consider the following upper bound on $P(Y_1 - Y_0 \leq 0)$:
\begin{align}
    F^U(0) &= 1 + \inf_y \min\left\{F_1(y) - P(Y_0 < y),\, 0\right\} \nonumber\\
           &= 1 + \inf_y \min\left\{F_1(y) - F_0(y) + P(Y_0 = y),\, 0\right\} \label{eq:upp_bound_0}.
\end{align}
If this upper bound exceeds $\alpha$, then zero must lie within the prediction interval, i.e., the left endpoint cannot be strictly greater than zero.

Similarly, if $P(Y_1 - Y_0 \geq 0) > \alpha$, then the right endpoint of any valid prediction interval must be at least zero. This probability can be written as
\[
P(Y_1 - Y_0 \geq 0) = 1 - P(Y_1 - Y_0 < 0),
\]
so we may use a lower bound on $P(Y_1 - Y_0 < 0)$ to assess whether the right endpoint must include zero. The lower bound is given by:
 \begin{align}
F^L(0^-)&=\sup_y\max\{F_1(y)-F_0(y),0\}\label{eq:lower_bound_0-}.
\end{align}
If this lower bound is less than $1 - \alpha$, then the right endpoint must be at least zero.

Therefore, if both the left endpoint must be less than or equal to zero and the right endpoint must be greater than or equal to zero, the prediction interval necessarily includes zero and cannot be bounded away from it. 

In fact, if the distribution of $Y_1-Y_0$ is absolutely continuous with respect to Lebesgue measure so $P(Y_1-Y_0=0)=0$, then we can combine the conditions in (\ref{eq:upp_bound_0}) and (\ref{eq:lower_bound_0-}) to state that for a valid $(1-\alpha)$ prediction interval for the ITE to exclude zero, either the lower bound on $P(Y_1 - Y_0 \leq 0)$ to exceed $1 - \alpha$ or the upper bound to be less than $\alpha$—conditions that are only met in highly extreme cases.

In Appendix~\ref{section:high_dimension}, we briefly discuss the cases of bounding ITE prediction intervals in the ordinal outcome setting, using a new result that we will prove in Section \ref{sec:sharp_PMF_bounds}. 

\section{Fr\'{e}chet-Hoeffding bound on the pmf and cdf of ITE under binary treatment and outcome model}\label{sec:binary_Frechet}
In the previous section, we used bounds on the cumulative distribution function (cdf) of the individual treatment effect (ITE) under known marginal distributions for potential outcomes. We now turn to the corresponding problem for the probability mass function (pmf). As before, we begin with the binary treatment and outcome setting.
We first consider the binary treatment and binary outcome model where we know the marginals for $Y_0$ is $p, 1-p$ and the marginals for $Y_1$ is $q, 1-q$.
We can parameterize the joint density using $P(Y_0=1, Y_1=0)=t$ and Table $\ref{table_binary}$.
\begin{center}
\begin{table}[h]
\scalebox{0.85}{
\begin{tabular}{r|ccc}
 & $P(Y\!=\!0\mid X\!=\!0)=p$ & $P(Y\!=\!1\mid X\!=\!0)=1-p$ \\[4pt]
 \hline
 &\\[-5pt]
 $P(Y\!=\!0\mid X\!=\!1)=q$ & $t\in[\max\{0,p+q-1\},\min\{p, q\}]$& $q-t$\\[4pt]
  $P(Y\!=\!1\mid X\!=\!1)=1-q$ & $p-t$ &$1-q-p+t$ \\[4pt]
    \end{tabular}
\bigskip
}
\caption{Binary Treatment and Outcome Model}
\label{table_binary}
\end{table}
\end{center}

\begin{definition}[Fr\'{e}chet-Hoeffding bounds]
For two real valued random variables $Y_1,Y_0$ and any $y_1,y_0\in \mathbb{R}$, suppose that we know $P(Y_0\leq y_0)=a$ and $P(Y_1\leq y_1)=b$, then
\begin{align*}
\max\{0,a+b-1\}\leq P(Y_0\leq y_0, Y_1\leq y_1)\leq \min\{a,b\}.
\end{align*}
\end{definition}
Note: Here we make the distinction that Fr\'{e}chet-Hoeffding bounds are bounding the joint cdf of $Y_1, Y_0$ while the Fr\'{e}chet inequality bounds are bounding the joint pmf of $Y_1, Y_0$.

\begin{definition}[Comonotonicity]
The bivariate random vector $Y=(Y_0,Y_1)\in \mathbb{R}^2$ is called comonotonic if
\begin{align*}
P(Y_0\leq y_0, Y_1\leq y_1)=\min\{P(Y_0\leq y_0),P(Y_1\leq y_1)\}
\end{align*}
for all $y_0,y_1\in \mathbb{R}$. In this case, the bivariate distribution functions $F(y_1,y_0)=P(Y_0\leq y_0, Y_1\leq y_1)$ achieves the Fr\'{e}chet-Hoeffding upper bound and we call the two random variables $Y_0, Y_1$ perfectly positively dependent.
\end{definition}

\begin{definition}[Countermonotonicity]
The bivariate random vector $Y=(Y_0,Y_1)\in \mathbb{R}^2$ is called countermonotonic if
\begin{align*}
P(Y_0\leq y_0, Y_1\leq y_1)=\max\{0,P(Y_0\leq y_0)+P(Y_1\leq y_1)-1\}
\end{align*}
for all $y_0,y_1\in \mathbb{R}$. In this case, the bivariate distribution functions $F(y_1,y_0)=P(Y_0\leq y_0, Y_1\leq y_1)$ achieves the Fr\'{e}chet-Hoeffding lower bound and we call the two random variables $Y_0, Y_1$ perfectly negatively dependent.
\end{definition}
In the binary treatment and outcome model, a perfect positive dependence of $Y_1, Y_0$ is achieved by $t=\min\{p,q\}$ and a perfect negative dependence of $Y_1,Y_0$ is achieved by $t=\max\{0,p+q-1\}$. 

\begin{proposition}
When the treatment and outcome are both binary, the sharp pmf bounds on $Y_1-Y_0$ are achieved at the Fr\'{e}chet-Hoeffding bounds. \label{prop:PMF_binary}
\end{proposition}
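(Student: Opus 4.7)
The plan is to exploit the fact that, in the binary-binary setting, the set of joint distributions $P(Y_0,Y_1)$ consistent with the marginals $(p,q)$ is a one-parameter family. Following the table, I would parametrize it by $t = P(Y_0=0,Y_1=0)$, so that the three pmf values of $Y_1-Y_0$ are affine functions of $t$:
\begin{align*}
P(\mathrm{ITE}=-1) &= q - t,\\
P(\mathrm{ITE}=0) &= 1 - p - q + 2t,\\
P(\mathrm{ITE}=+1) &= p - t.
\end{align*}
Since each of these is linear (monotone) in $t$, its sharp upper and lower bounds over the feasible set of joints are attained at the endpoints of the range of $t$, namely $t_{\min}=\max\{0,p+q-1\}$ and $t_{\max}=\min\{p,q\}$, as dictated by the Fréchet inequality of Proposition~\ref{def:frechet_ineq}.

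Next I would connect these two endpoints to the Fréchet–Hoeffding bounds on the joint CDF. For binary variables, the joint CDF $F(y_0,y_1) = P(Y_0\le y_0, Y_1 \le y_1)$ is trivially determined by the marginals at every $(y_0,y_1)$ except at $(0,0)$, where $F(0,0) = P(Y_0=0,Y_1=0) = t$. Consequently, $t = t_{\max}$ is equivalent to $F(0,0) = \min\{p,q\}$, i.e.\ the Fréchet–Hoeffding upper bound being attained everywhere (comonotonicity), while $t = t_{\min}$ is equivalent to $F(0,0) = \max\{0,p+q-1\}$, the Fréchet–Hoeffding lower bound (countermonotonicity). Thus the extreme joints for the pmf of the ITE are exactly the two copulas sitting at the Fréchet–Hoeffding bounds.

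Finally, to state which extreme delivers which bound, I would tabulate:
\begin{align*}
\max P(\mathrm{ITE}=\pm 1) &\text{ is achieved at the FH lower bound } (t=t_{\min}),\\
\min P(\mathrm{ITE}=\pm 1) &\text{ is achieved at the FH upper bound } (t=t_{\max}),\\
\max P(\mathrm{ITE}=0) &\text{ is achieved at the FH upper bound } (t=t_{\max}),\\
\min P(\mathrm{ITE}=0) &\text{ is achieved at the FH lower bound } (t=t_{\min}),
\end{align*}
and note that these bounds are simultaneously attainable (not merely pointwise sharp) because a single joint distribution, the comonotonic or countermonotonic one, realizes all three extremes at once.

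There is no real obstacle here; the argument is essentially a one-parameter linear optimization. The only thing worth being careful about is the identification of ``sharpness'': since the feasible set of joints is convex and one-dimensional, linearity in $t$ is enough to guarantee that the pmf extrema are achieved, not merely bounded, by the distributions at the two Fréchet–Hoeffding endpoints — so the bounds are in fact attained, which is what is needed to call them sharp.
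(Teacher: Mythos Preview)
Your proposal is correct and follows essentially the same approach as the paper: parametrize the joint by the single free parameter $t$, observe that each pmf value of $Y_1-Y_0$ is affine in $t$, and conclude that the sharp extremes occur at $t_{\min}=\max\{0,p+q-1\}$ and $t_{\max}=\min\{p,q\}$, i.e.\ at the Fr\'{e}chet--Hoeffding bounds. Your additional step of explicitly identifying $t_{\min},t_{\max}$ with the countermonotonic and comonotonic couplings via the joint cdf at $(0,0)$, and tabulating which endpoint yields which extremum, is a helpful elaboration but not a departure from the paper's argument.
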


From Table \ref{table_binary}, we have $P(\text{ITE}=-1)=q-t$, $P(\text{ITE}=0)=1-p-q+2t$, $P(\text{ITE}=1)=p-t$. The upper and lower bounds for each ITE value are reached at the Fr\'{e}chet-Hoeffding bound on the joint distribution of $Y_1, Y_0$ (i.e. when $t$ takes minimum or maximum value). 

We further observe that for $P(\text{ITE}=1)$, $P(\text{ITE}=-1)$, we can obtain pmf bounds by applying Fr\'{e}chet inequality bounds on each cell directly. And for $P(\text{ITE}=0)$, we can first obtain Fr\'{e}chet inequality bounds on the two cells $t\in[\max\{0,p+q-1\},\min\{p, q\}]$ and $1-q-p+t\in[\max\{0,1-p-q\},\min\{1-p, 1-q\}]$ and then add up the lower and upper bounds to obtain $1-q-p+2t\in [\max\{0,p+q-1\}+\max\{0,1-p-q\},\min\{p, q\}+\min\{1-p, 1-q\}]$. This simplifies to the bounds given by $1-p-q+2t, t\in [\max\{0,p+q-1\},\min\{p, q\}]$ as before. We will generalize this observation in section \ref{sec:sharp_PMF_bounds}.

\begin{proposition}
When the treatment and outcome are both binary, the sharp bounds on cdf of $Y_1-Y_0$ are achieved at the Fr\'{e}chet-Hoeffding bounds. \label{prop:CDF_binary}
\end{proposition}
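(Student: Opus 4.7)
The plan is to mirror the argument for Proposition~\ref{prop:PMF_binary}, exploiting the fact that with binary $Y_0,Y_1$ the joint distribution is parameterized by the single scalar $t=P(Y_0=0,Y_1=0)$ and every relevant quantity is affine in $t$. Since $Y_1-Y_0\in\{-1,0,1\}$, the CDF $F_{\mathrm{ITE}}$ equals $0$ for arguments below $-1$ and equals $1$ for arguments at least $1$, independent of the joint distribution. Hence the task reduces to bounding the two nontrivial values $F_{\mathrm{ITE}}(-1)$ and $F_{\mathrm{ITE}}(0)$ over admissible couplings.

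First I would read off from Table~\ref{table_binary}
\[
F_{\mathrm{ITE}}(-1)=P(\mathrm{ITE}=-1)=q-t,\qquad F_{\mathrm{ITE}}(0)=(q-t)+(1-p-q+2t)=1-p+t,
\]
both of which are affine in $t$, with $t$ constrained to the compact interval $[\max\{0,p+q-1\},\,\min\{p,q\}]$ by the Fr\'echet inequality on the cell $P(Y_0=0,Y_1=0)$. Since an affine function on a closed interval attains its extrema at the endpoints, the sharp upper (resp.\ lower) bound on $F_{\mathrm{ITE}}(-1)$ is attained at $t=\max\{0,p+q-1\}$ (resp.\ $t=\min\{p,q\}$), while the opposite is true for $F_{\mathrm{ITE}}(0)$. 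This already shows that each sharp pointwise bound on $F_{\mathrm{ITE}}$ is realized at one of the two extremal values of $t$.

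Next I would identify these two extremal values of $t$ with the Fr\'echet--Hoeffding joint distributions. Because $(Y_0,Y_1)$ is binary, the joint CDF $P(Y_0\le y_0,Y_1\le y_1)$ only has one nontrivial value, namely $P(Y_0\le 0,Y_1\le 0)=t$; the remaining values are pinned down by the marginals. Direct substitution shows that $t=\min\{p,q\}$ gives $P(Y_0\le y_0,Y_1\le y_1)=\min\{F_0(y_0),F_1(y_1)\}$ pointwise, i.e.\ the comonotonic (Fr\'echet--Hoeffding upper) coupling, and $t=\max\{0,p+q-1\}$ gives the countermonotonic (Fr\'echet--Hoeffding lower) coupling. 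Combining with the previous paragraph, the sharp upper/lower bounds on $F_{\mathrm{ITE}}(z)$ for every $z$ are attained at one of these two Fr\'echet--Hoeffding couplings.

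The only delicate point is the correspondence in the last step: verifying that in the binary case the scalar constraint on the single cell $P(Y_0=0,Y_1=0)$ already encodes the full Fr\'echet--Hoeffding constraint on the joint CDF. This is essentially a direct check on the $2\times 2$ contingency table, so I do not expect a real obstacle. The rest is the standard linear-programming observation that a one-parameter affine objective on an interval is extremized at an endpoint, exactly as in the proof of the pmf version.
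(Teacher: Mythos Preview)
Your proposal is correct and follows essentially the same route as the paper: parameterize the $2\times2$ joint by the single scalar $t$, write $F_{\mathrm{ITE}}(-1)=q-t$ and $F_{\mathrm{ITE}}(0)=1-p+t$, and observe that these affine functions attain their extrema at the endpoints $t=\max\{0,p+q-1\}$ and $t=\min\{p,q\}$, which the paper has already identified (just before Proposition~\ref{prop:PMF_binary}) with the countermonotonic and comonotonic couplings. Your write-up is somewhat more explicit about the endpoint identification and the affine/compact-interval reasoning, but there is no substantive difference in strategy.
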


Fr\'{e}chet-Hoeffding bound on the joint distribution of $Y_1, Y_0$ implies that $t\in[\max\{0,p+q-1\},\min\{p, q\}]$. Since we can parameterize the joint probability with one parameter $t$, we have $P(\text{ITE}=-1)=q-t$, $P(\text{ITE}=0)=1-p-q+2t$. In the binary case, the cdf for the ITE is characterized by the values $F(-1)=P(\text{ITE}=-1)=q-t$, $F(0)=P(\text{ITE}\leq 0)=P(\text{ITE}=-1)+P(\text{ITE}=0)=1-p+t$, $F(1)=1$. 

As noted in Section $2.1$ of \cite{fan2010sharp}, sharp bounds on the cdf of the individual treatment effect are not achieved at the Fr\'{e}chet-Hoeffding lower and upper bounds (perfectly positive/ perfectly negative dependence) for the distribution of $Y_1, Y_0$.  As a special case, the sharp bounds on cdf of ITE are achieved at the Fr\'{e}chet-Hoeffding bounds in binary treatment and outcome model.

\begin{proposition}\label{prop:CDF_to_PMF_binary}
In general, we can obtain valid pmf bounds from the cdf bounds. For example, if the ITE takes integer values, then $P(\mathrm{ITE}=i)= P(\mathrm{ITE}\leq i)-P(\mathrm{ITE}< i)=F(i)-F(i-1)$. If we know $F(i)\in [a,b]$ and $F(i-1)\in[c,d]$, we can obtain that 
\begin{align}
P(\mathrm{ITE}=i)\in [a-d, b-c].\label{eq:CDF_to_PMF}
\end{align}
When the treatment and outcome are both binary, the pmf bounds obtained from the sharp bounds on the cdf of $Y_1-Y_0$ using the above method are sharp.
\end{proposition}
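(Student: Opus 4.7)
The proposition has two parts: (i) the validity of the elementary bounds in \eqref{eq:CDF_to_PMF}, and (ii) sharpness of those bounds in the binary treatment/outcome setting when they are applied to the sharp cdf bounds.

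For part (i), the plan is just to use $P(\mathrm{ITE}=i)=F(i)-F(i-1)$ and interval arithmetic: if $F(i)\in[a,b]$ and $F(i-1)\in[c,d]$, then trivially $a-d\le F(i)-F(i-1)\le b-c$. So \eqref{eq:CDF_to_PMF} always yields a \emph{valid} enclosure, though generically loose (because the constraints $F(i)\ge F(i-1)$ and the fact that both endpoints come from the same underlying joint distribution are ignored).

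For part (ii), I would exploit the one-parameter description of the joint distribution from Table~\ref{table_binary}. Writing $t=P(Y_0=1,Y_1=0)$, we have $F(-1)=P(\mathrm{ITE}=-1)=q-t$, $F(0)=P(\mathrm{ITE}\le 0)=1-p+t$, and $F(1)=1$, with $t$ ranging over $[\max\{0,p+q-1\},\min\{p,q\}]$ by the Fr\'echet inequality. By Proposition~\ref{prop:CDF_binary}, the sharp cdf bounds are exactly the images of this interval under the maps $t\mapsto q-t$ and $t\mapsto 1-p+t$. The key observation is that $F(0)$ is strictly increasing in $t$ while $F(-1)$ is strictly decreasing in $t$, so the extremes of $F(0)$ and of $-F(-1)$ are attained \emph{simultaneously} at the two endpoints of the $t$-interval.

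Given that, I would verify sharpness value by value. For $P(\mathrm{ITE}=-1)=F(-1)-F(-2)=F(-1)$ and $P(\mathrm{ITE}=1)=F(1)-F(0)=1-F(0)$, the bounds returned by \eqref{eq:CDF_to_PMF} are literally the sharp cdf bounds on $F(-1)$ and $F(0)$, so sharpness is immediate and each extreme is attained at a Fr\'echet--Hoeffding joint. For $P(\mathrm{ITE}=0)=F(0)-F(-1)=1-p-q+2t$, formula \eqref{eq:CDF_to_PMF} gives the lower bound $F(0)_{\min}-F(-1)_{\max}$ and the upper bound $F(0)_{\max}-F(-1)_{\min}$; by the opposite-monotonicity argument above, these are exactly the values of $1-p-q+2t$ at $t=\max\{0,p+q-1\}$ and $t=\min\{p,q\}$ respectively, which by Proposition~\ref{prop:PMF_binary} are the sharp pmf bounds and are realized by the countermonotonic and comonotonic couplings.

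The only subtle point, and the one I would flag, is why the seemingly loose interval-arithmetic operation in \eqref{eq:CDF_to_PMF} is not loose here: in general the four extremes $F(i)_{\min},F(i)_{\max},F(i-1)_{\min},F(i-1)_{\max}$ come from possibly different underlying joint distributions, so the pair $(F(0)_{\max},F(-1)_{\min})$ need not be jointly realizable. The binary case is special because the joint distribution is indexed by a single scalar $t$ under which the two cdf values move in opposite directions; this collapses the four-parameter worry into a one-parameter monotonicity and makes the \eqref{eq:CDF_to_PMF}-derived bounds coincide with the direct sharp bounds of Proposition~\ref{prop:PMF_binary}. Emphasizing this is also what motivates the more refined treatment needed for larger ordinal outcome spaces in Section~\ref{sec:sharp_PMF_bounds}.
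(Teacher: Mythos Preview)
Your proposal is correct and follows essentially the same route as the paper: compute the sharp cdf bounds on $F(-1)$ and $F(0)$ from the one-parameter $t$-description, apply the interval-arithmetic rule \eqref{eq:CDF_to_PMF} value by value, and verify that the resulting pmf intervals coincide with the sharp pmf bounds of Proposition~\ref{prop:PMF_binary}. The paper simply carries out that computation and checks agreement; you add the explicit opposite-monotonicity observation (that $F(0)$ and $F(-1)$ move in opposite directions in $t$) to explain \emph{why} the generically loose rule \eqref{eq:CDF_to_PMF} happens to be tight here, which is a helpful clarification the paper leaves implicit.
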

Using the cdf bounds for $Y_1-Y_0$ in Proposition \ref{prop:CDF_binary}, we have $$P(\mathrm{ITE}=1)=F(1)-F(0),$$ with $F(1)=1, F(0)=1-p+t\in[1-p+\max\{0,p+q-1\},1-p+\min\{p,q\}]$. Threrefore, 
$$P(\mathrm{ITE}=1)\in [p-\min\{p,q\}, p-\max\{0,p+q-1\}].$$
Similarly, $P(\mathrm{ITE}=0)=F(0)-F(-1)$ for $F(-1)\in [q-\min\{p,q\}, q-\max\{0,p+q-1\}]$. Thus,
$$P(\mathrm{ITE}=0)\in[1-p-q+2\max\{0,p+q-1\},1-p-q+2\min\{p,q\}].$$ And 
$$P(\mathrm{ITE}=-1)=F(-1)=q-t, t\in[\max\{0,p+q-1\},\min\{p, q\}].$$ These bounds agree with the sharp bounds derived in Proposition \ref{prop:PMF_binary}. However, in general, the pmf bounds derived from cdf bounds using $(\ref{eq:CDF_to_PMF})$ are not sharp. Therefore, we will explore sharp pmf bounds in Section \ref{sec:sharp_PMF_bounds}.

\section{Sharp bounds on the pmf of Individual Treatment Effects}\label{sec:sharp_PMF_bounds}
A natural question to ask here is: given the marginals of $Y_1, Y_0$, can we derive sharp bounds on the probability mass function (pmf) of ITE, for example, what are the bounds for $P(\text{ITE}=0)?$

This is given by adding up the Fr\'{e}chet inequality bounds for each $P(Y_1=i, Y_0=i-\delta)$. 
\begin{theorem}\label{pmf}
For discrete random variables $Y_1, Y_0$ with given marginals and a given $\delta$, let $[L_i, U_i]=[\max\{P(Y_1=i)+P( Y_0=i-\delta)-1,0\},\min\{P(Y_1=i), P( Y_0=i-\delta)\}]$ be the Fr\'{e}chet inequality bounds for the joint probability $P(Y_1=i, Y_0=i-\delta)$. We claim that
\begin{align}
P(Y_1-Y_0=\delta)\in[\sum_i L_i, \sum_i U_i].\label{PMF_bound}
\end{align}
Furthermore, the bounds in (\ref{PMF_bound}) are sharp as there exists joint distributions of $Y_1,Y_0$ that satisfies the given marginals and achieves the upper or lower bound on $P(Y_1-Y_0=\delta)$.
\end{theorem}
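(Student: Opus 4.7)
The containment is immediate from $P(Y_1 - Y_0 = \delta) = \sum_i P(Y_1 = i, Y_0 = i - \delta)$ together with the pointwise Fr\'echet inequality bounds $L_i \leq P(Y_1 = i, Y_0 = i - \delta) \leq U_i$. The substantive content is the sharpness, which I plan to establish by explicitly constructing two joint distributions of $(Y_1, Y_0)$ that respect the given marginals $p_i := P(Y_1 = i)$, $q_j := P(Y_0 = j)$ and attain the two endpoints.

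For the upper endpoint, the plan is to place mass $U_i = \min\{p_i, q_{i-\delta}\}$ at each diagonal point $(i, i-\delta)$ and then couple the residual marginals $p^+(i) := p_i - U_i$ and $q^+(j) := q_j - U_{j+\delta}$ off-diagonal. The key observation is that for every $i$ at least one of $p^+(i), q^+(i-\delta)$ vanishes, since $U_i$ equals whichever of $p_i, q_{i-\delta}$ is smaller. Hence any coupling of the normalized residuals---the product coupling, say---assigns zero mass to the diagonal, and adding it back to the diagonal piece yields a valid joint with $P(Y_1 - Y_0 = \delta) = \sum_i U_i$.

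For the lower endpoint I plan an analogous construction: place mass $L_i$ on $(i, i-\delta)$, form residuals $p^-(i) := p_i - L_i$, $q^-(j) := q_j - L_{j+\delta}$ (total mass $c = 1 - \sum_i L_i$), and couple their normalizations so that no mass lands on the diagonal $\{(i, i-\delta)\}$. The main tool is a Hall / max-flow--min-cut lemma for measures: a coupling of two probabilities $\mu, \nu$ on $\mathbb{Z}$ with $\pi(\{(i,i)\}) = 0$ for every $i$ exists iff $\mu(i) + \nu(i) \leq 1$ for every $i$. (Sufficiency: Hall's condition applied to the bipartite graph of off-diagonal pairs is trivial when $|A| \geq 2$, since the neighborhood is all of $\mathbb{Z}$, and reduces to exactly $\mu(k) + \nu(k) \leq 1$ when $A = \{k\}$.) After the shift $Y_0 \mapsto Y_0 + \delta$, this reduces the problem to verifying $p^-(i) + q^-(i-\delta) \leq c$ for every $i$.

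The decisive combinatorial fact that makes this pointwise inequality hold is that at most one index $i^*$ can have $L_{i^*} > 0$: if $L_i > 0$ and $L_j > 0$ with $i \neq j$, then $p_i + q_{i-\delta} > 1$ and $p_j + q_{j-\delta} > 1$ would sum to strictly more than $2$, contradicting $p_i + p_j \leq 1$ and $q_{i-\delta} + q_{j-\delta} \leq 1$. With this rigidity, a short case analysis (on whether $i = i^*$ or not) confirms $p^-(i) + q^-(i-\delta) \leq c$ for all $i$, with equality at $i = i^*$; the zero-diagonal coupling of residuals therefore exists, and rescaling by $c$ and combining with the diagonal piece produces a valid joint attaining $\sum_i L_i$. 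The hardest part of the argument is recognizing and exploiting this ``at most one positive $L_i$'' identity---without it, the Hall condition would not be automatic and sharpness of the lower bound could in principle fail.
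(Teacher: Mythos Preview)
Your proposal is correct and follows essentially the same route as the paper: the same ``at most one $L_i>0$'' observation drives the lower-bound sharpness via a Hall/Strassen condition, and the upper bound is likewise obtained by placing $U_i$ on the diagonal and product-coupling the residuals (the paper writes this out via an explicit permutation and submatrix fill, but it is the same construction). The only packaging difference is that the paper splits the lower bound into two cases---all $L_i=0$ handled by Strassen, and the single-positive-$L_{i^*}$ case handled by an explicit ``cross'' construction---whereas you unify both under the residual-Hall argument; either way the same combinatorial identity does the work.
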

\subsection{Proof of Theorem \ref{pmf}}
First, we show that the bound in (\ref{PMF_bound}) is a valid bound. Notice 
\begin{align*}
P(Y_1-Y_0=\delta)=\sum_iP(Y_1=i, Y_0=i-\delta).
\end{align*} 
And for all $i$, Fr\'{e}chet inequality bounds states that 
\begin{align*}
P(Y_1=i, Y_0=i-\delta)&\in[L_i, U_i],
\end{align*}
where
\begin{align*}
[L_i, U_i]&=[\max\{P(Y_1=i)+P( Y_0=i-\delta)-1,0\},\min\{P(Y_1=i), P( Y_0=i-\delta)\}].
\end{align*}
So the sum must be within the bound in $(\ref{PMF_bound})$.\\ 

Now, we will show that the bounds in $(\ref{PMF_bound})$ are tight in a sense that there exist joint distributions compatible with the marginal conditions that achieve the lower/upper bounds. First, we will state a very useful result in \cite{koperberg2024couplings}.

 Let $A$ and $B$ be sets and $R \subseteq A \times B$ a relation. Then for each $\mathrm{U} \subseteq A$ the set of neighbours of $\mathrm{U}$ in $R$ is denoted by
$$
\mathcal{N}_{\mathrm{R}}(\mathrm{U})=\{\mathrm{b} \in \mathrm{B}:(\mathrm{U} \times\{\mathrm{b}\}) \cap \mathrm{R} \neq \varnothing\}.
$$
\begin{theorem}[Strassen's theorem for finite sets] Let $\mathrm{A}$ and $\mathrm{B}$ be finite sets, $\mathrm{P}$ and $\mathrm{P}^{\prime}$ probability measures on $\mathrm{A}$ and $\mathrm{B}$ respectively and $\mathrm{R} \subseteq A \times \mathrm{B}$ a relation. Then there exists a coupling $\hat{\mathrm{P}}$ of $\mathrm{P}$ and $\mathrm{P}^{\prime}$ that satisfies $\hat{\mathrm{P}}(\mathrm{R})=1$ if and only if
$$
\mathrm{P}(\mathrm{U}) \leqslant \mathrm{P}^{\prime}\left(\mathcal{N}_{\mathrm{R}}(\mathrm{U})\right) \text {, for all } \mathrm{U} \subseteq A \text {. }
$$\label{strassen}
\end{theorem}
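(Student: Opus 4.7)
The plan is to handle the two directions separately. The \emph{only if} direction is straightforward: if $\hat{P}$ is a coupling of $P, P'$ with $\hat{P}(R)=1$, then for any $U \subseteq A$, every pair $(a,b)\in R$ with $a \in U$ satisfies $b \in \mathcal{N}_R(U)$, so
\begin{equation*}
P(U) = \hat{P}(U \times B) = \hat{P}\bigl((U \times B) \cap R\bigr) \leq \hat{P}(A \times \mathcal{N}_R(U)) = P'(\mathcal{N}_R(U)).
\end{equation*}

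For the \emph{if} direction, the approach I would take is a max-flow/min-cut argument on a bipartite transportation network. Build a directed graph with nodes $\{s\} \cup A \cup B \cup \{t\}$, with edges $s \to a$ of capacity $P(a)$ for each $a \in A$, edges $a \to b$ of capacity $+\infty$ for each $(a,b)\in R$, and edges $b \to t$ of capacity $P'(b)$ for each $b \in B$. The total outflow from $s$ is $1$, so the max flow is at most $1$; if it equals $1$, then every edge leaving $s$ and entering $t$ must be saturated by flow conservation, and setting $\hat{P}(\{(a,b)\}) := f(a,b)$ where $f$ is the max flow yields a probability measure on $A \times B$ with the prescribed marginals, automatically supported on $R$ since only $R$-edges can carry flow between $A$ and $B$.

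To show that the max flow equals $1$, by the max-flow/min-cut theorem it suffices to prove that every $s$-$t$ cut has capacity at least $1$. Partition the vertices into the source-side $\{s\} \cup S_A \cup S_B$ and sink-side $\{t\} \cup T_A \cup T_B$, where $(S_A,T_A)$ partitions $A$ and $(S_B,T_B)$ partitions $B$. A finite-capacity cut cannot contain any $R$-edge from $S_A$ to $T_B$, which forces $\mathcal{N}_R(S_A) \subseteq S_B$; the cut capacity is then $P(T_A) + P'(S_B)$, minimized (for fixed $S_A$) at $S_B = \mathcal{N}_R(S_A)$. Writing $U := S_A$, the minimum cut equals
\begin{equation*}
\min_{U \subseteq A}\bigl[\, P(A \setminus U) + P'(\mathcal{N}_R(U))\, \bigr] \;=\; 1 \;-\; \max_{U \subseteq A}\bigl[\, P(U) - P'(\mathcal{N}_R(U))\, \bigr],
\end{equation*}
which is $\geq 1$ precisely when the hypothesis $P(U) \leq P'(\mathcal{N}_R(U))$ holds for every $U \subseteq A$.

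The main obstacle is the combinatorial step identifying minimum cuts with the Hall-type hypothesis; in particular, one must be careful that the infinite-capacity $R$-edges really do enforce $\mathcal{N}_R(S_A) \subseteq S_B$ in every finite cut, and that the optimization over $S_B$ can be performed independently of $S_A$. Once this correspondence is pinned down, the rest is bookkeeping. An alternative route, avoiding flow networks entirely, is to clear denominators so all probabilities are rational with common denominator $N$, build a bipartite multigraph on $N$ copies of $A$ and $N$ copies of $B$ with $R$-edges, and apply a defect version of Hall's marriage theorem to extract a perfect matching whose normalized indicator gives $\hat{P}$; but I find the max-flow/min-cut formulation cleaner because the role of $R$ enters transparently through the placement of the infinite-capacity edges.
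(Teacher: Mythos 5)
Your proof is correct, but there is nothing in the paper to compare it against: the paper imports this theorem verbatim from Koperberg (2024) as a known result and supplies no proof of its own, using it only as a black box in the proof of Theorem~\ref{pmf}. Your two directions both check out. The \emph{only if} direction is the standard one-line containment $(U\times B)\cap R \subseteq A\times \mathcal{N}_R(U)$. For the \emph{if} direction, the max-flow/min-cut argument is sound: the infinite-capacity edges on $R$ do force $\mathcal{N}_R(S_A)\subseteq S_B$ in any finite cut, the optimization over $S_B$ for fixed $S_A$ is correctly resolved at $S_B=\mathcal{N}_R(S_A)$ because $P'$ is monotone in $S_B$, and saturation of all source and sink edges when the flow value equals $1$ gives exactly the marginal constraints, so the flow itself is the desired coupling (its total mass is $1$ and it is supported on $R$ by construction). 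The one point worth making explicit is that the max-flow/min-cut theorem is usually stated for finite capacities; since the network is finite you can replace $+\infty$ by any constant exceeding $1$ without changing which cuts are relevant. Your alternative route via rationalization and the deficiency form of Hall's theorem is essentially the proof given in the combinatorics literature; either is a legitimate self-contained substitute for the citation the paper relies on.
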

Now we can begin the proof with a proposition.
\begin{proposition}
If
\begin{align*}
&\max\{P(Y_1=i)+P( Y_0=j)-1,0\}>0\\
\text{and} \quad &\max\{P(Y_1=k)+P( Y_0=l)-1,0\}>0
\end{align*}
then we have either $i=k$ or $j=l$. In other words, at most one Fr\'{e}chet inequality lower bound can be non-zero if $i\neq k$ and $j\neq l$.
\label{prop:at_most_one}
\end{proposition}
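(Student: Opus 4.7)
The plan is to prove the contrapositive by a direct counting argument. I will assume that both Fr\'echet lower bounds are strictly positive while $i \neq k$ and $j \neq l$, and derive a contradiction from basic probability axioms.

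The hypothesis $\max\{P(Y_1=i)+P(Y_0=j)-1,0\}>0$ is equivalent to the strict inequality $P(Y_1=i)+P(Y_0=j)>1$, and likewise the second hypothesis says $P(Y_1=k)+P(Y_0=l)>1$. Adding these two inequalities gives
\[
P(Y_1=i) + P(Y_1=k) + P(Y_0=j) + P(Y_0=l) > 2.
\]
On the other hand, since $i\neq k$, the singleton events $\{Y_1=i\}$ and $\{Y_1=k\}$ are disjoint, so their probabilities sum to at most $1$. The same reasoning applied to $j \neq l$ yields $P(Y_0=j)+P(Y_0=l)\leq 1$. Summing these two upper bounds gives at most $2$, contradicting the strict lower bound of $2$ obtained above. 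Hence at least one of $i=k$ or $j=l$ must hold.

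There is really no main obstacle here; the argument is a short pigeonhole observation dressed up in probability language. The only thing to be careful about is unpacking the $\max\{\cdot,0\}>0$ formulation correctly into the strict inequality $P(Y_1=i)+P(Y_0=j)>1$, and noting that the disjointness of $\{Y_1=i\}, \{Y_1=k\}$ (and of $\{Y_0=j\}, \{Y_0=l\}$) relies precisely on the assumption that $i\neq k$ and $j\neq l$. This proposition will serve in the next step of the proof of Theorem~\ref{pmf} to guarantee that the Fr\'echet lower bounds contribute only along a ``relation'' with limited overlap, which is the structural property needed to apply Strassen's theorem and construct a coupling achieving the claimed sharp bounds.
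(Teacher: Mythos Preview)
Your proof is correct and follows essentially the same approach as the paper: both assume the two strict inequalities hold with $i\neq k$ and $j\neq l$, add them to exceed $2$, and then observe that the four probabilities can sum to at most $2$, yielding a contradiction. Your version is slightly more explicit in invoking the disjointness of $\{Y_1=i\}$ and $\{Y_1=k\}$ (and of $\{Y_0=j\}$ and $\{Y_0=l\}$) to get the bound of $2$, whereas the paper bounds by the full marginal sums, but the argument is the same.
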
 
\begin{proof}
Suppose, for a contradiction that $i\neq j$, $k\neq l$ and
 \begin{align}
 P(Y_1=i)+P( Y_0=j)-1&>0 \label{s_p_1};\\ 
 P(Y_1=k)+P( Y_0=l)-1&>0\label{s_p_2}.
 \end{align}
 Summing up $(\ref{s_p_1})$ and $(\ref{s_p_2})$, we get
 \begin{align*}
  P(Y_1=i)+P( Y_0=j)+ P(Y_1=k)+P( Y_0=l)-2>0.
 \end{align*}
 However, we also have
 \begin{align*}
P(Y_1=i)+&P( Y_0=j)+P(Y_1=k)+P( Y_0=l)\\&\leq \sum_iP(Y_1=i)+\sum_jP(Y_0=j)=2.
\end{align*}
Therefore, we got a contradiction.
\end{proof}
We will now show that there exist a joint distribution of $Y_1, Y_0$ such that $P(Y_1=i, Y_0=i-\delta)=\max\{P(Y_1=i)+P( Y_0=i-\delta)-1,0\}=L_i$ for all $i$. By Proposition \ref{prop:at_most_one}, $P(Y_1=i, Y_0=i-\delta)\neq 0$ for at most one $i$.  There are thus two cases to consider here: when $L_i=0$ for all $i$ and when $L_i=0$ for all $i$ except one.\\

Case $1$: when $P(Y_1=i, Y_0=i-\delta)=0$ for all $i$. Case $1$ implies that
\begin{align}
&\max\{P(Y_1=i)+P( Y_0=i-\delta)-1,0\}=0 \quad \forall i\\
 \Rightarrow \quad& P(Y_1=i) \leq 1-P( Y_0=i-\delta) \quad \forall i .\label{lower_bound_zero_constraint}
\end{align}
We apply Theorem \ref{strassen}. Let $A,B$ be the support of the specified margins for $Y_1, Y_0$ respectively. Naturally, we have $P(Y_1), P(Y_0)$ as two probability measures on $A,B$. Let $R=\{(i,j):i\in A, j\in B, i-j\neq \delta\}$. Strassen's theorem states that there exists a coupling $\hat{P}$ of $P(Y_1), P(Y_0)$ with $\hat{P}(R)=1$ if and only if 
\begin{align*}
P(Y_1\in U) \leqslant P\left(Y_0\in\mathcal{N}_{R}(U)\right) \text {, for all } U \subseteq A ,
\end{align*}
where $$
\mathcal{N}_{R}(U)=\{b \in B:(U \times\{b\}) \cap R \neq \varnothing\}.
$$
By construction of $R$, for any $i\in A$,
\begin{align*}
\mathcal{N}_{R}(i) = B\setminus \{i-\delta\}.
\end{align*}
Therefore for any $U\subseteq A$ with more than one element, we have $\mathcal{N}_{R}(U)=B$ and:
\begin{align*}
\mathrm{P}\left(Y_0\in\mathcal{N}_{\mathrm{R}}(\mathrm{U})\right)=\mathrm{P}\left(Y_0\in B\right)=1.
\end{align*}
 Thus the only non trivial constraints are for singleton sets $U$ given by: 
\begin{align}
P(Y_1=i)\leq \sum_{j\neq i-\delta}P(Y_0=j)=1-P(Y_0=i-\delta). \label{lower_bound_zero_constraint_strassen}
\end{align}
The constraints in $(\ref{lower_bound_zero_constraint_strassen})$ are already satisfied by the assumption in  $(\ref{lower_bound_zero_constraint})$. Therefore, by Strassen's theorem, there must exist a coupling $P^\prime$ such that $P^\prime(R)=1$. Thus, the lower bounds $L_i=0$ for all $i$ are achievable in case $1$.\\

Case $2$: Consider the case that $L_i=0$ for all $i$ except for $i=j$. We will explicitly construct a joint distribution for $Y_1, Y_0$ that satisfies: 
\begin{align}
&P(Y_1=j, Y_0=j-\delta)=P(Y_1=j)+P( Y_0=j-\delta)-1 >0 \label{eq:lower_bound_not_0};\\ 
&P(Y_1=i, Y_0=i-\delta)=0 \label{eq:lower_bound_0_except_1}
\quad \text{ for all} \quad i\neq j.
\end{align} 
Consider a distribution satisfies by (\ref{eq:lower_bound_not_0}) and (\ref{eq:lower_bound_0_except_1}). Further, let $P(Y_1=i, Y_0=k)=0$ for any $i\neq j$ or $k\neq j-\delta$. Let $P(Y_1=j, Y_0=k)=P(Y_0=k), k\neq j-\delta$ and $P(Y_1=i, Y_0=j-\delta)=P(Y_1=i), i\neq j$.  For all $i\neq j$, we have
 \begin{align*}
\sum_kP(Y_1=i, Y_0=k)&= P(Y_1=i, Y_0=j-\delta)= P(Y_1=i).
 \end{align*}
For all $k\neq j-\delta$, we have
 \begin{align*}
\sum_iP(Y_1=i, Y_0=k)&= P(Y_1=j, Y_0=k)=P(Y_0=k).
 \end{align*}
Also, we have
 \begin{align*}
\sum_kP(Y_1=j, Y_0=k)&= P(Y_1=j, Y_0=j-\delta)+\sum_{k\neq j-\delta}P(Y_1=j, Y_0=k)\\
&=\max\{P(Y_1=j)+P( Y_0=j-\delta)-1,0\}+\sum_{k\neq j-\delta}P(Y_0=k)\\
&= P(Y_1=j)+P( Y_0=j-\delta)-1+(1-P(Y_0=j-\delta))\\
&=P(Y_1=j),
 \end{align*}
where the second equality follows definition of $j$. Similarly, 
  \begin{align*}
\sum_iP(Y_1=i, Y_0=j-\delta)&= P(Y_1=j, Y_0=j-\delta)+\sum_{i\neq j}P(Y_1=i, Y_0=j-\delta)\\
&=\max\{P(Y_1=j)+P( Y_0=j-\delta)-1,0\}+\sum_{i\neq j}P(Y_1=j)\\
&= P(Y_1=j)+P( Y_0=j-\delta)-1+(1-P(Y_1=j))\\
&=P(Y_0=j-\delta).
 \end{align*}
All the $P(Y_1=i, Y_0=j)$ are non-negative. This is a valid joint distribution that satisfies the given marginals. Thus, the lower bounds are achievable. An example of the construction of the joint probability matrix in case 2 is given in Figure \ref{fig:construction_matrix_case_2}. \\
\begin{figure}[h!]
\centering
\begin{tikzpicture}
\tikzset{
  rows/.style 2 args={
    sub@rows/.style={row ##1 column #2/.style={nodes={very thick, rectangle,draw=red}}},
    sub@rows/.list={#1}
  },
  box/.style 2 args={
    sub@box/.style={rows={#1}{##1}},
    sub@box/.list={#2}
  }
}
\tikzset{%
square matrix/.style={
    matrix of nodes,
    column sep=-\pgflinewidth, 
    row sep=-\pgflinewidth,
    nodes in empty cells,
    nodes={draw,
      minimum size=#1,
      anchor=center,
      align=center,
      inner sep=0pt
    },
    column 1/.style={nodes={fill=green!10}},
    row 1/.style={nodes={fill=green!10}},
  },
  square matrix/.default=0.8cm
}
\tikzset{slice/.style={path picture={
\stepcounter{slicenum}

\coordinate (O-\theslicenum) at ($(path picture bounding box.south
west)!0.5!(path picture bounding box.north east)$);
\pgfmathsetmacro{\myshift}{2*(1-sqrt(3)/2)*\side}
\coordinate (A-\theslicenum) at
($(O-\theslicenum)+(150:\side)+(0,{\myshift})$);
\coordinate (B-\theslicenum) at ($(O-\theslicenum)+(30:\side)+(0,{\myshift})$);
\coordinate (C-\theslicenum) at ($(O-\theslicenum)+(-90:\side)+(0,{\myshift})$);
\filldraw (A-\theslicenum) circle (1pt) (B-\theslicenum) circle (1pt) (C-\theslicenum) circle (1pt);
\foreach \x/\y/\z in {#1} {\draw[\z] (\x-\theslicenum) -- (\y-\theslicenum);}
}}}
\matrix[square matrix, box={3,5}] (A)
{
  & $a_0$ & $a_1$ & $a_2$ & $a_3$ & $a_4$ & $a_5$\\ 
$b_0$ & 0&0 &0 & $b_0$ &0 &0 \\ 
$b_1$ & $a_0$ & $a_1$ & $a_2$ & $m$ & $a_4$ & $a_5$ \\ 
$b_2$ &  0&  0& 0& $b_2$&0 &0   \\ 
$b_3$ &  0& 0 & 0 & $b_3$& 0& 0\\ 
$b_4$ &  0&  0& 0 &  $b_4$&0 &0 \\ 
$b_5$ &  0&  0& 0 &  $b_5$&0  &0 \\ 
};

\draw (A-1-1.north west)--(A-1-1.south east);
\node[below left=0mm and 0mm of A-1-1.north east, scale=0.85] {$Y_0$};
\node[above right=0mm and 0mm of A-1-1.south west, scale=0.85] {$Y_1$};
\node [draw,below=10pt] at (A.south) 
    { Assume $\delta=-2$, $a_3+b_1-1=m>0$ };
\end{tikzpicture}
\caption{Example construction of the matrix in  \textcolor{red}{Case 2} }
\label{fig:construction_matrix_case_2}
\end{figure}

Next, we show that there exists a joint distribution of $Y_1,Y_0$ such that $P(Y_1=i, Y_0=i-\delta)=\min\{P(Y_1=i), P( Y_0=i-\delta)\}=U_i$ for all $i$. We will construct a joint distribution of $Y_1, Y_0$ using the joint probability table explicitly. We first permute the joint probability table based on whether the upper bound is achieved at $P(Y_1=i)$ or $P(Y_0=i-\delta)$. Let $J_1$ be the set of $i$ such that $P(Y_1=i, Y_0=i-\delta)=P(Y_1=i)$ for $i\in J_1$ and $J_2$ be the set of $j$ such that $P(Y_1=j+\delta, Y_0=j)=P(Y_0=j)$ for $j\in J_2$. We can rearrange the joint probability table based on the partition of $J_1, J_2$. Let $|J_1|=n_1, |J_2|=n_2$. Let $N_1$ be the number of elements in the support of $Y_1$. Let $N_2$ be the number of elements in the support of $Y_0$. We permute the joint probability matrix of $Y_1, Y_0$ such that the first $n_1$ rows denotes the probability of elements in $J_1$ and the last $n_2$ rows denotes the probability of elements in $J_2$. Figure $\ref{Probability Table Permutation}$ shows an example of such permutation.
\begin{figure}[h!]
\centering
\begin{tikzpicture}
\tikzset{
  rows/.style 2 args={
    sub@rows/.style={row ##1 column #2/.style={nodes={very thick, rectangle,draw=red}}},
    sub@rows/.list={#1}
  },
  box/.style 2 args={
    sub@box/.style={rows={#1}{##1}},
    sub@box/.list={#2}
  }
}
\tikzset{%
square matrix/.style={
    matrix of nodes,
    column sep=-\pgflinewidth, 
    row sep=-\pgflinewidth,
    nodes in empty cells,
    nodes={draw,
      minimum size=#1,
      anchor=center,
      align=center,
      inner sep=0pt
    },
    column 1/.style={nodes={fill=green!10}},
    row 1/.style={nodes={fill=green!10}},
  },
  square matrix/.default=0.8cm
}
\tikzset{slice/.style={path picture={
\stepcounter{slicenum}
\coordinate (O-\theslicenum) at ($(path picture bounding box.south
west)!0.5!(path picture bounding box.north east)$);
\pgfmathsetmacro{\myshift}{2*(1-sqrt(3)/2)*\side}
\coordinate (A-\theslicenum) at
($(O-\theslicenum)+(150:\side)+(0,{\myshift})$);
\coordinate (B-\theslicenum) at ($(O-\theslicenum)+(30:\side)+(0,{\myshift})$);
\coordinate (C-\theslicenum) at ($(O-\theslicenum)+(-90:\side)+(0,{\myshift})$);
\filldraw (A-\theslicenum) circle (1pt) (B-\theslicenum) circle (1pt) (C-\theslicenum) circle (1pt);
\foreach \x/\y/\z in {#1} {\draw[\z] (\x-\theslicenum) -- (\y-\theslicenum);}
}}}
\matrix[square matrix, box={2,4},box={3,5},box={4,6},box={5,7}] (A)
{
  & $a_0$ & $a_1$ & $a_2$ & $a_3$ & $a_4$ & $a_5$\\ 
$b_0$ & & &$a_2$ & &0 & \\ 
$b_1$ & 0 & 0& 0& $b_1$ &0 & 0\\ 
$b_2$ &  &  & 0& &$a_4$ &   \\ 
$b_3$ &  0& 0 & 0 &0 & 0& $b_3$\\ 
$b_4$ &  &  & 0 &  &0 & \\ 
$b_5$ &  &  & 0 &  &0  & \\ 
};

\draw (A-1-1.north west)--(A-1-1.south east);
\node[below left=0mm and 0mm of A-1-1.north east, scale=0.85] {$Y_0$};
\node[above right=0mm and 0mm of A-1-1.south west, scale=0.85] {$Y_1$};
\node [draw,below=10pt] at (A.south) 
    { $A$ : Matrix \textcolor{red}{before} permutation};
\matrix[square matrix, box={2,4},box={3,5},box={4,6},box={5,7}] (C) at (6.5cm,0cm)
{
  & $a_0$ & $a_1$ & $a_3$ & $a_5$ & $a_2$ & $a_4$\\ 
$b_1$ & 0& 0&$b_1$ & 0&0 &0 \\ 
$b_3$ & 0 & 0& 0& $b_3$ &0 & 0\\ 
$b_0$ &  &  & & &$a_2$ &  0 \\ 
$b_2$ &  &  &  & & 0& $a_4$\\ 
$b_4$ &  &  &  &  &0 & 0\\ 
$b_5$ &  &  &  &  &0  & 0\\ 
};

\draw (C-1-1.north west)--(C-1-1.south east);
\node[below left=0mm and 0mm of C-1-1.north east, scale=0.85] {$Y_0$};
\node[above right=0mm and 0mm of C-1-1.south west, scale=0.85] {$Y_1$};
\node [draw,below=10pt] at (C.south) 
    { $A'$ : Matrix \textcolor{red}{after} permutation};
\end{tikzpicture}
\caption{Probability Table Permutation}
\label{Probability Table Permutation}
\end{figure}

By construction, the first $n_1$ rows and the last $n_2$ columns are filled with $0$'s and $U_i$'s. We have an empty $(N_1-n_1)\times(N_2\times n_2)$ sub-matrix to fill. We also obtain a new set of margin constraints on the sub-matrix by subtracting the marginals with what we have already filled. Notice that the row/column sum of these new constraints for the $(N_1-n_1)\times(N_2\times n_2)$ matrix is given by
\begin{align*}
s=1-\sum_{i\in J_1}P(Y_1=i)-\sum_{j\in J_2}P(Y_0=j).
\end{align*}
We can fill each entry of the sub-matrix by 
\begin{align*}
P(Y_1=i, Y_0=j)&= (P(Y_1=i)-I((i+\delta)\in J_2)P(Y_0=i+\delta))\\ &\times (P(Y_0=j)-I(j-\delta\in J_1)P(Y_1=j-\delta))/s.
\end{align*}
Thus, we construct a valid joint distribution of $Y_1, Y_0$ with $P(Y_1=i, Y_0=i-\delta)=\min\{P(Y_1=i), P( Y_0=i-\delta)\}=U_i$ for all $i$ and satisfies the given marginals. This concludes our proof of theorem \ref{pmf}. We note that the same proof technique can be used to obtain the pmf bounds on the sum of two discrete random variables given the marginals.

\begin{corollary}
In the case where the distribution of one potential outcome is degenerate, the individual treatment effect is fully identified. For example, if we assume $P(Y_0=0)=1$, then the upper and lower bounds in Theorem \ref{pmf} coincide and we can identify $P(Y_1-Y_0=\delta)$.
\end{corollary}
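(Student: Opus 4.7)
The plan is to appeal directly to Theorem \ref{pmf} and verify that the Fréchet lower and upper bounds $L_i$ and $U_i$ coincide term-by-term when one marginal is degenerate. Taking the illustrative case $P(Y_0 = 0) = 1$, the marginal of $Y_0$ places all its mass at zero, so $P(Y_0 = i - \delta) = 1$ when $i = \delta$ and $P(Y_0 = i - \delta) = 0$ otherwise. The identification then reduces to evaluating the two endpoints of the interval in \eqref{PMF_bound} in this special case.

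The core of the argument is a two-case computation indexed by $i$. For the single index $i = \delta$, the lower Fréchet bound becomes $L_\delta = \max\{P(Y_1 = \delta) + 1 - 1,\,0\} = P(Y_1 = \delta)$ and the upper bound becomes $U_\delta = \min\{P(Y_1 = \delta),\,1\} = P(Y_1 = \delta)$, so the two coincide at $P(Y_1 = \delta)$. For every other index $i \neq \delta$, both bounds reduce to zero: $U_i = \min\{P(Y_1 = i),\,0\} = 0$ forces $L_i = 0$ as well, since $0 \leq L_i \leq U_i$. Summing over $i$ then yields $\sum_i L_i = \sum_i U_i = P(Y_1 = \delta)$, and applying Theorem \ref{pmf} gives the identification $P(Y_1 - Y_0 = \delta) = P(Y_1 = \delta)$ for every $\delta$.

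There is no real obstacle in this argument; the corollary is essentially a consistency check on Theorem \ref{pmf}, and the sharpness half of that theorem is not even needed here because the interval in \eqref{PMF_bound} is already a singleton. The only minor generalization worth mentioning is that if $Y_0$ is degenerate at some value $c \neq 0$, or if instead $Y_1$ is the degenerate potential outcome, the identical argument applies after an obvious relabeling, since the Fréchet inequality bounds depend only on the numerical values of the marginal probabilities and not on the support labels themselves.
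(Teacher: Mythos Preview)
Your proposal is correct and follows essentially the same approach as the paper: both split into the two cases $i=\delta$ and $i\neq\delta$, evaluate the Fr\'echet bounds $L_i$ and $U_i$ directly from the degenerate marginal $P(Y_0=i-\delta)\in\{0,1\}$, observe that $L_i=U_i$ in each case, and sum to get $\sum_i L_i=\sum_i U_i=P(Y_1=\delta)$. The only cosmetic difference is that you infer $L_i=0$ from $0\le L_i\le U_i=0$ for $i\neq\delta$, whereas the paper computes $L_i=\max\{P(Y_1=i)+0-1,0\}=0$ directly; both are equally valid.
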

\begin{proof}
This follows from the fact that $[L_i, U_i]=[\max\{P(Y_1=i)+P( Y_0=i-\delta)-1,0\},\min\{P(Y_1=i), P( Y_0=i-\delta)\}]=[0,0]$ when $i-\delta\neq 0$ and $[L_i, U_i]=[\max\{P(Y_1=i)+P( Y_0=i-\delta)-1,0\},\min\{P(Y_1=i), P( Y_0=i-\delta)\}]=[P(Y_1=i),P(Y_1=i)]$ when $i-\delta =0$. Therefore, $\sum_iL_i=\sum_iU_i=P(Y_1=\delta)$. This means that in the case where $P(Y_0=0)=1$, the bounds in Theorem \ref{pmf} tell us that $P(Y_1-Y_0=\delta)=P(Y_1=\delta)$.
\end{proof}

\section{Discussion of ATE versus ITE}\label{sec:discuss_ate_ite}

In this section, we discuss the relationship between the average treatment effect (ATE) and the individual treatment effect (ITE). We focus on the implications for prediction intervals and hypothesis testing. The ATE is a well-defined parameter and is identifiable under standard assumptions. Consequently, as the sample size increases, confidence intervals for the ATE will eventually shrink to a point. In contrast, the individual treatment effect (ITE) is not a parameter in the classical sense: it is a random quantity defined at the unit level. Although prediction intervals for the ITE may become narrower with larger sample sizes, they do not, in general, converge to zero width. This reflects the inherent uncertainty in predicting individual level responses, even when the population-level effect is precisely estimated.

The Neyman null hypothesis posits a zero effect on average, while the Fisher null hypothesis posits no effect for \emph{every} individual. By construction, a true Fisher null implies the Neyman null; if all individual effects are zero, their average must be zero.  A true Fisher null also implies that $\{0\}$ will be a valid prediction interval for any level $\alpha$. A true Neyman null implies that confidence interval for ATE will contain zero with probability $(1-\alpha)\%$.

When performing hypothesis testing, $\{0\}$ being a valid prediction interval can be considered as evidence in support of the Fisher null while confidence interval does not contain $0$ can be considered as evidence against Neyman null. However, in finite samples, a failure to reject the Fisher null does not imply the ATE is in fact zero, and rejecting the Neyman null (i.e. finding a nonzero estimated ATE) does not automatically imply that $\{0\}$ cannot be a valid prediction interval.

 It is possible to construct a valid prediction interval for ITE as $\{0\}$, even when the data yield a confidence interval for the ATE that excludes zero. Formally, if $\alpha = 0.05$, it is possible that $95\%$ of the individuals have a zero treatment effect, while the ATE analysis, being an aggregate statement, detects a statistically meaningful difference overall from the rest of the $5\%$ population. This implies, perhaps paradoxically, we can have evidence against Neyman null but also evidence in support of Fisher null.

Non-individualized decision policies can consequently outperform individualized decision policies in some partial identification settings; see \cite{cui2021individualized}. Even with an ATE significantly different from zero, substantial uncertainty at the individual level (reflected in wide ITE intervals) can yield a scenario where assigning the same treatment to everyone is more reliable than any personalized rule that attempts to exploit covariate information. This outcome highlights how partial identification and weakly informative data can obscure the individual-level signal, despite showing clear evidence of an overall effect.

\subsection{Synthetic data example}

Consider a randomized trial for $n=50000$ participants with binary outcomes, divided into treatment and control groups using a coin flip. Suppose that treatment cures $3\%$ of the population, hurts $1\%$ of the population, while having no effect on $96\%$ of the populations. We observe the following marginals in Table \ref{tab:example_ate_ite}.

\begin{table}[h]
    \centering
    \begin{tabular}{r|cc}
     & $\hat{P}(Y\!=\!0\mid D\!=\!0) = 0.9896$ & $\hat{P}(Y\!=\!1\mid D\!=\!0)= 0.0104$\\[4pt]
     \hline
     &\\[-8pt]
     $\hat{P}(Y\!=\!0\mid D\!=\!1)=0.9697$ & $P($NR$)$ & $P($HU$)$ \\[4pt]
      $\hat{P}(Y\!=\!1\mid D\!=\!1)=0.0303$ & $P($HE$)$ &$P($AR$)$ \\[4pt]
    \end{tabular}
    \caption{Marginal distributions estimated from the numerical simulation.}
    \label{tab:example_ate_ite}
\end{table}

We estimate the average treatment effect to be $0.0198$ with $95\%$ confidence interval $[0.0174 , 0.0223]$. This yields a confidence interval for the average treatment effect excluding zero. However, based on the result in Section \ref{singleton}, we will conclude a $95\%$ prediction interval for ITE as a singleton $\{0\}$.

Although this situation might seem paradoxical, it simply reflects the fact that population-level inferences about the mean effect can diverge from inferences about individual-level effects under uncertainty. On the one hand, one can accumulate enough information to claim that the \emph{average} effect is nonzero, while on the other hand, one lacks the precision required to identify which individuals truly benefit.

\subsection{Further discussions}
Even with unconfoundedness (or randomization), the difficulty of estimating ITE lies in the unknown structure of potential outcomes $Y_1, Y_0$. If $Y_1$ and $Y_0$ are independent, then the fundamental problem of causal inference does not exist. For example, the introduction section of \cite{yin2022conformal} can be confusing. Similarly, the section on numerical experiments in \cite{lei2021conformal} also assumes the independence of $Y_1$ and $Y_0$.

A complementary strategy under partial identification is to condition on covariates $X$, seeking subpopulations where treatment effects are more homogeneous. Our results can be extended to this scenario when the treatment is conditionally independent of the potential outcomes. For example, if there exist covariates for which $P(Y=1 \mid D=1,X)$ or $P(Y=1 \mid D=0,X)$ are close to $0$ or $1$, the uncertainty about individual-level effects in that subgroup may be greatly reduced, enabling tighter bounds on $P\bigl(Y_{1}-Y_{0} \in [L,R] \mid X\bigr)$. In Appendix \ref{sec:pi_ite_covariates}, we consider the ITE prediction intervals in the setting of randomized experiments where additional covariates are observed for each individual and provide an example of how the conditional ITE prediction interval can be even wider compared to the normal ITE prediction interval. We also need to be cautious with such conditional individual treatment effect prediction intervals as they could disagree with the conditional average treatment effect (CATE) estimation results. 

\section{Acknowledgments}
We thank James M. Robins, Carlos Cinelli, Yanqin Fan, Ting Ye and Gary Chan for valuable input and discussion.

\newpage
\singlespacing
\bibliography{ITE_prediction_intervals_and_sharp_bounds_arxiv}

\newpage
\doublespacing
\begin{appendices}
\section{When is a given prediction interval valid?}\label{app:when_valid}
\subsection{When is $\{0\}$ a valid prediction interval for the ITE?}

A valid interval consists of $\{0\}$ when the proportion of people of type Always Recover plus the proportion of type
Never Recover is known to be greater than or equal to $(1-\alpha)$.

This implies that:
\begin{align*}
&1-P(Y\!=\!1 \mid D\!=\!0)-P(Y\!=\!1 \mid D\!=\!1)\\
& + 2 \max\left\{0,(P(Y\!=\!1 \mid D\!=\!0)+P(Y\!=\!1 \mid D\!=\!1)) - 1\right\}
\geq (1-\alpha),
\end{align*}
equivalently,
\begin{align*}
&\max\left\{ 1-P(Y\!=\!1 \mid D\!=\!0)-P(Y\!=\!1 \mid D\!=\!1),\right.\\  
&\left. P(Y\!=\!1 \mid D\!=\!0)+P(Y\!=\!1 \mid D\!=\!1) - 1\right\} \geq (1-\alpha),
\end{align*}
or in other words, either
\begin{align*}
P(Y\!=\!1 \mid D\!=\!0)+ P(Y\!=\!1 \mid D\!=\!1)\leq \alpha,
\end{align*}
or 
\begin{align*}
P(Y\!=\!0 \mid D\!=\!0)+ P(Y\!=\!0 \mid D\!=\!1)\leq \alpha.
\end{align*}

\subsubsection{When is $\{1\}$ a valid prediction interval for the ITE?}

A valid interval consists of $\{1\}$ when the proportion of people of type Helped is always greater than or equal to $(1-\alpha)$.

This will occur when:
\begin{align*}
P(Y=1\mid D=1) - P(Y=1\mid D=0) \geq (1-\alpha).
\end{align*}
In other words, the Average Treatment Effect is required to be greater than $(1-\alpha)$.

\subsubsection{When is $\{-1\}$ a valid prediction interval for the ITE?}

A valid interval consists of $\{-1\}$ when the proportion of people of type Hurt is always greater than or equal to $(1-\alpha)$.

This will occur when:
\begin{align*}
P(Y=1\mid D=0) - P(Y=1\mid D=1) \geq (1-\alpha).
\end{align*}
In other words, the Average Treatment Effect is required to be less than $-(1-\alpha)$.

\subsection{When is the valid prediction interval for the individual treatment effect non-negative/non-positive?}

We now consider when a valid $(1-\alpha)\%$ prediction interval rules out a negative/positive result. There are two cases to consider:

\subsubsection{When is $[0,1]$ a valid prediction interval for the ITE?}
A valid interval consists of $[0,1]$ when the proportion of people of type Helped plus proportion of people of type Always Recover plus proportion of people of type Never Recover is always greater than or equal to $(1-\alpha)$. In other words, the proportion of people of type Hurt should always be less than $\alpha$. This implies that
\begin{align*}
P(Y=1\mid D=0)-t< \alpha
\end{align*}
for all $t$. That is, 
\begin{align*}
P(Y=1\mid D=0)-\max\left\{0,(P(Y\!=\!1 \mid D\!=\!0)+P(Y\!=\!1 \mid D\!=\!1)) - 1\right\}< \alpha,
\end{align*}
equivalently, either
\begin{align*}
P(Y=1\mid D=0)< \alpha,
\end{align*}
or 
\begin{align*}
1-P(Y=1\mid D=1)< \alpha.
\end{align*}
\subsubsection{When is $[-1,0]$ a valid prediction interval for the ITE?}
A valid interval consists of $[-1,0]$ when the proportion of people of type Hurt plus proportion of people of type Always Recover plus proportion of people of type Never Recover is always greater than or equal to $(1-\alpha)$. In other words, the proportion of people of type Helped should always be less than $\alpha$. This implies that
\begin{align*}
P(Y=1\mid D=1)-t< \alpha
\end{align*}
for all $t$. That is, 
\begin{align*}
P(Y=1\mid D=1)-\max\left\{0,(P(Y\!=\!1 \mid D\!=\!0)+P(Y\!=\!1 \mid D\!=\!1)) - 1\right\}< \alpha,
\end{align*}
equivalently, either
\begin{align*}
P(Y=1\mid D=1)< \alpha,
\end{align*}
or 
\begin{align*}
1-P(Y=1\mid D=0)< \alpha.
\end{align*}

\section{Necessary conditions for a given prediction interval to be valid and of minimal length}\label{app:necessary_conditions}

\subsection{Necessary conditions for $\{1\}$ being a valid prediction interval}
Suppose we know $\{1\}$ is a valid prediction interval, what are the necessary conditions on the marginal distributions $P(Y=1\mid D=0)$ and $P(Y=1\mid D=1)$? This means that there exists $t$ such that the proportion of Helped is greater than $1-\alpha$. That is:
\begin{align}
    P(Y=1\mid D=1)-t\geq 1-\alpha\label{eq:1_mix}
\end{align}
for some 
\begin{align*}
    \max\{0,P(Y=1\mid D=1)+P(Y=1\mid D=0)-1\}\leq t\leq\min\{P(Y=1\mid D=1),P(Y=1\mid D=0)\}.
\end{align*}
We obtain the constraints
\begin{align*}
    P(Y=1\mid D=1)&\geq 1-\alpha;\\
    P(Y=1\mid D=0)&\leq \alpha.
\end{align*}
For any marginal distribution that satisfies these constraints, there exists a $t$ that satisfies (\ref{eq:1_mix}).
\subsection{Necessary conditions for $\{-1\}$ being a valid prediction interval}
Similarly, for
\begin{align*}
    P(Y=1\mid D=0)&\geq 1-\alpha,\\
    P(Y=1\mid D=1)&\leq \alpha,
\end{align*}
there exists $t$ such that 
\begin{align*}
    P(Y=1\mid D=0)-t\geq 1-\alpha.
\end{align*}
\subsection{Necessary conditions for $\{0\}$ being a valid prediction interval}
Suppose we know $\{0\}$ is a valid prediction interval,  what are the necessary conditions on the marginal distributions $P(Y=1\mid D=0)$ and $P(Y=1\mid D=1)$? This means the proportion of Never Recover plus Always Recover is greater than or equal to $1-\alpha$. That is:
\begin{align}
1 - P(Y=1 \mid D=1) - P(Y=1 \mid D=0) + 2t \geq 1 - \alpha. \label{eq:0PI}
\end{align}
For any
\begin{align*}
    |P(Y=1\mid D=1)-P(Y=1\mid D=0)|\leq \alpha,
\end{align*}
there exists a $t$ that satisfies $(\ref{eq:0PI})$.
\subsection{Necessary conditions for $[-1,1]$ being the best prediction interval}
Suppose we know $[-1,1]$ is the best prediction interval we can get. This means there exists $t$ such that the proportion of people of type Hurt and the proportion of people of type Helped are both greater than or equal to $\alpha$. That is:
\begin{align*}
    P(Y=1\mid D=1)-t\geq \alpha;\\
    P(Y=1\mid D=0)-t\geq \alpha.
\end{align*}
We obtain the constraints
\begin{align}
    \alpha\leq P(Y=1\mid D=1)&\leq 1-\alpha; \label{eq:-1to11}\\
    \alpha \leq P(Y=1\mid D=0)&\leq 1-\alpha.\label{eq:-1to12}
\end{align}
For any marginal distribution that satisfies these constraints, there exists a $t$ that satisfies both $(\ref{eq:-1to11}), (\ref{eq:-1to12})$.
\subsection{Necessary conditions for $[-1,0]$ being the best prediction interval}
Suppose we know $[-1,0]$ is the best prediction interval we can get. First, it needs to be valid. This means there exists $t$ such that the proportion of people of type Hurt, Always Recovered and Never Recovered needs to be greater than or equal to $1-\alpha$. That is, the proportion of people of type Helped is less than or equal to $\alpha$. Then it needs to be best (we cannot conclude $\{-1\}, \{0\}$ as prediction intervals and $[-1,0]$ has better coverage than $[0,1]$), that means the proportion of people of type Hurt is greater than or equal to the proportion of people of type Helped, the proportion of people of type Hurt is less than $1-\alpha$, the proportion of people of type Always Recover plus the proportion of people of type Never Recover is less than $1-\alpha$. That is:
\begin{align}
    P(Y=1\mid D=1)-t\leq \alpha; \label{eq:-1to01}\\
    P(Y=1\mid D=0)-t\geq P(Y=1\mid D=1)-t;\\
    P(Y=1\mid D=0)-t<1-\alpha;\\
    1-P(Y=1\mid D=1)-P(Y=1\mid D=0)+2t< 1-\alpha. \label{eq:-1to03}
\end{align}
We obtain the constraints
\begin{align*}
P(Y=1\mid D=0)&\geq P(Y=1\mid D=1);\\
    1-P(Y=1\mid D=0)+P(Y=1\mid D=1)&> \alpha;\\
    P(Y=1\mid D=0)+P(Y=1\mid D=1)&> \alpha;\\
    2-P(Y=1\mid D=0)-P(Y=1\mid D=1)&> \alpha.
\end{align*}
For any marginal distribution that satisfies these constraints, there exists a $t$ that satisfies both $(\ref{eq:-1to01})- (\ref{eq:-1to03})$.

\subsection{Necessary conditions for $[0,1]$ being a best prediction interval}
Similarly, suppose we know $[0,1]$ is the best prediction interval we can get. This means there exists $t$ such that the proportion of people of type Hurt is less than or equal to $\alpha$, the proportion of people of type Helped is greater than or equal to the proportion of people of type Hurt, the proportion of people of type Helped is less than $1-\alpha$, the proportion of people of type Always Recover plus the proportion of people of type Never Recover is less than $1-\alpha$. That is:
\begin{align}
    P(Y=1\mid D=0)-t\leq \alpha; \label{eq:0to11}\\
    P(Y=1\mid D=1)-t\geq P(Y=1\mid D=0)-t;\\
    P(Y=1\mid D=1)-t<1-\alpha;\\
    1-P(Y=1\mid D=1)-P(Y=1\mid D=0)+2t< 1-\alpha. \label{eq:0to13}
\end{align}
We obtain the constraints
\begin{align*}
P(Y=1\mid D=1)&\geq P(Y=1\mid D=0);\\
    1-P(Y=1\mid D=1)+P(Y=1\mid D=0)&> \alpha;\\
    P(Y=1\mid D=0)+P(Y=1\mid D=1)&> \alpha;\\
    2-P(Y=1\mid D=0)-P(Y=1\mid D=1)&> \alpha.
\end{align*}
For any marginal distribution that satisfies these constraints, there exists a $t$ that satisfies both $(\ref{eq:0to11})- (\ref{eq:0to13})$.

\section{Sharp Bounds on the Cumulative Distribution of Individual Treatment Effect}\label{sec:sharp_CDF_bounds}
 \cite{fan2010sharp} state bounds on the distribution of treatment effect that are sharp when the distribution of the outcome is absolutely continuous with respect to Lebesgue measure. \cite{zhang2024bounds} modify the bounds in \cite{fan2010sharp} so that they are sharp when the distribution function of the outcome is not continuous. Let $Y_1$ and $Y_0$ be potential outcomes of each individual in the population. To simplify notation, define the individual treatment effect $\Delta=Y_1-Y_0$. Let $F_1, F_0$ be the cumulative distribution function (cdf) on $Y_1, Y_0$ respectively. Let $F_\Delta(\cdot)$ be the cdf for $\Delta$.
\begin{theorem}[\citeauthor{zhang2024bounds}  ]\label{bounds}
For any given value $\delta$, sharp bounds on $F_\Delta(\delta)$ are given by
 \begin{align}
F^L(\delta)&=\sup_y\max\{F_1(y)-P(Y_0<y-\delta),0\}\nonumber \\
&= \sup_y\max\{F_1(y)-F_0(y-\delta)+P(Y_0=y-\delta),0\};\label{eq:correct-upper}\\[16pt]
F^U(\delta)&=1+\inf_y \min\{F_1(y)-P(Y_0<y-\delta),0\}\nonumber \\
&=  1+\inf_y \min\{F_1(y)-F_0(y-\delta)+P(Y_0=y-\delta),0\}.\label{eq:correct-lower}
\end{align}
\end{theorem}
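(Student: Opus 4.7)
The plan is to split the argument into two parts: validity of the bounds and sharpness. For validity I would apply Fréchet-type inequalities to carefully chosen events, noting that if $Y_1 \leq y$ and $Y_0 \geq y-\delta$ then $Y_1 - Y_0 \leq \delta$, while if $Y_1 > y$ and $Y_0 < y-\delta$ then $Y_1 - Y_0 > \delta$. Concretely, for the lower bound
\[
F_\Delta(\delta) \;\geq\; P(Y_1 \leq y,\, Y_0 \geq y-\delta) \;\geq\; F_1(y) + (1 - P(Y_0 < y - \delta)) - 1 \;=\; F_1(y) - P(Y_0 < y - \delta),
\]
and combining with the trivial bound $F_\Delta(\delta) \geq 0$ and taking the supremum over $y$ yields $F^L(\delta)$. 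The upper bound follows symmetrically by bounding $1 - F_\Delta(\delta)$ from below using the event $\{Y_1 > y, Y_0 < y-\delta\}$, and then taking the infimum over $y$. The critical detail is using the strict inequality $P(Y_0 < y-\delta)$ rather than $F_0(y-\delta)$; this is precisely what allows the bounds to remain tight in the presence of atoms.

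For sharpness, I would fix a value $y^*$ that (approximately) attains the supremum in $F^L$, and construct a coupling of $Y_1$ and $Y_0$ with the correct marginals under which the Fréchet lower bound applied at $y^*$ is saturated. The construction partitions the sample space into four cells based on whether $Y_1 \leq y^*$ and whether $Y_0 < y^* - \delta$, and arranges mass so that $\{Y_1 \leq y^*\}$ and $\{Y_0\geq y^*-\delta\}$ have overlap exactly $\max\{F_1(y^*) - P(Y_0 < y^* - \delta), 0\}$. Within each cell one then rearranges mass to ensure the inclusion $\{Y_1 \leq y^*, Y_0 \geq y^* - \delta\} \subseteq \{Y_1 - Y_0 \leq \delta\}$ is used efficiently while preserving the marginals. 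In the discrete setting, the existence of such a coupling can be established via Strassen's theorem (Theorem \ref{strassen}), exactly as in the proof of Theorem \ref{pmf}; in the absolutely continuous setting, explicit quantile-based rearrangements (as in \cite{fan2010sharp}) suffice. The upper bound is handled analogously, either by a parallel construction or by reducing via $Y_0 \mapsto -Y_0$.

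The main obstacle is sharpness at atoms of $F_0$. When $P(Y_0 = y^* - \delta) > 0$, the difference $F_1(y^*) - F_0(y^* - \delta)$ and $F_1(y^*) - P(Y_0 < y^* - \delta)$ differ by exactly the atom mass, and a careful accounting is needed to decide how much of that atom to allocate to the "favorable" region $\{Y_0 \geq y^* - \delta\}$ versus the "unfavorable" region. This is where the modification of \cite{zhang2024bounds} relative to the classical Makarov/Fan-Park bounds becomes essential: the strict-inequality form $P(Y_0 < y - \delta)$ counts the atom on the favorable side, which is what an optimal coupling can achieve. I would verify this by explicit construction in the discrete case and then extend to general distributions via a limiting argument approximating arbitrary marginals by discretizations.

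Finally, I would confirm sharpness by computing $F_\Delta(\delta)$ under the constructed coupling at $y = y^*$, observing that the Fréchet inequality is saturated there, so neither $F^L$ nor $F^U$ can be improved without violating the marginal constraints.
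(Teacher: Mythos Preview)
The paper does not actually prove this theorem: it is stated as a cited result from \cite{zhang2024bounds}, with the remark that ``the proofs therein leverage key results from \cite{makarov1982estimates}, \cite{ruschendorf1982random}, and \cite{frank1987best}.'' There is therefore no in-paper proof to compare against; your proposal is a self-contained sketch of the argument that the cited references develop.

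Your validity argument is correct and is exactly the standard Fr\'echet-inequality derivation. The key observation you make---that one must work with $P(Y_0 < y-\delta)$ rather than $F_0(y-\delta)$ so that atoms of $Y_0$ fall on the favorable side---is precisely the correction \cite{zhang2024bounds} makes over \cite{fan2010sharp}, and your explanation of why this matters is accurate.

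The sharpness sketch has a gap worth flagging. Saturating the Fr\'echet lower bound at a single point $y^*$ gives you a coupling with $P(Y_1\leq y^*,\,Y_0\geq y^*-\delta)=\max\{F_1(y^*)-P(Y_0<y^*-\delta),0\}$, but this alone does not force $F_\Delta(\delta)$ to equal that quantity: you must also arrange that \emph{outside} the cell $\{Y_1\leq y^*,\,Y_0\geq y^*-\delta\}$ the event $\{\Delta\leq\delta\}$ has zero (or negligible) mass. Your phrase ``rearranges mass to ensure the inclusion is used efficiently'' gestures at this but does not establish it. The classical constructions (Makarov's shifted-comonotone coupling, or the R\"uschendorf/Frank--Nelsen--Schweizer copula approach) build the coupling globally so that $\{\Delta\leq\delta\}$ coincides, up to null sets, with the favorable cell; they do not proceed by first fixing the four-cell masses and then filling in. The discretization-plus-limit route you propose is viable in principle, but passing sharpness through a limit requires a tightness or continuity argument that you have not supplied. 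If you want a complete proof along your lines, the cleanest fix is to exhibit the coupling explicitly via generalized inverse cdfs (as in \cite{ruschendorf1982random}) rather than to rely on Strassen's theorem cell-by-cell.
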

Further details on these bounds for individual treatment effects, along with their implications, are provided in \cite{zhang2024bounds}. The proofs therein leverage key results from \cite{makarov1982estimates}, \cite{ruschendorf1982random}, and \cite{frank1987best}. Notably, \cite{zhang2024bounds} correct aspects of earlier findings reported in \cite{williamson1990probabilistic} and \cite{fan2010sharp}.

\begin{remark}[]
As discussed in \cite{zhang2024bounds}, the lower bound proposed in Lemma $2.1$ of \cite{fan2010sharp} is valid but not necessarily sharp and the upper bound might not be valid. The bounds in Theorem \ref{bounds} can be applied to more general settings when the outcome is discrete. For example, the ordinal treatment effects discussed in \cite{lu2018treatment} and \cite{huang2017inequality}. In general, the pmf bounds calculated using the cdf bounds are not sharp. We will discuss sharp pmf bounds in Section \ref{sec:sharp_PMF_bounds}.
\end{remark}

\section{Understanding prediction intervals for individual treatment effect when the outcome is ordinal} \label{section:high_dimension}
We now return to the question originally considered in section \ref{sec:ITE_prediction_binary} on prediction intervals with ordinal outcome.
\subsection{When will the prediction interval be trivial?}
What is the condition on the observed distribution under which without additional knowledge the only interval that we can be sure is valid is a trivial prediction interval? Let's assume $Y_1,Y_0$ takes value $\{0,1,2,...,n-1\}$. Then a trivial prediction interval would be $[-n+1, n-1]$. If it is possible that $P(\mathrm{ITE}=n-1)$ and $P(\mathrm{ITE}=-n+1)$ can both be greater than $\alpha$, then we can only provide a trivial prediction interval.  A necessary condition for $P(\mathrm{ITE}=n-1)$ and $P(\mathrm{ITE}=-n+1)$ to be greater than $\alpha$ is that the Fr\'{e}chet inequality upper bounds for both $P(Y_0=n-1,Y_1=0)$ and $P(Y_0=0,Y_1=n-1)$ are greater than $\alpha$. That is,
\begin{align*}
\min\{P(Y_1=0),P(Y_0=n-1)\}&> \alpha \\
\text{and}\quad \min\{P(Y_1=n-1),P(Y_0=0)\}&> \alpha. 
\end{align*}

Here we only consider prediction intervals. We could also consider non-overlapping prediction intervals or prediction sets.

\subsection{When can we tell that the prediction interval need not include $0$?}
We want to answer the question: what is the marginal condition that allow us to conclude the prediction interval not include $0$. A necessary condition for this is $P(\mathrm{ITE}=0)<\alpha$. Based on Theorem \ref{pmf}, the necessary condition requires that
\begin{align*}
\sum_i\min\{P(Y_1=i), P( Y_0=i)\}<\alpha.
\end{align*}
Alternatively, if we interpret the question as whether the left endpoint of the prediction interval is less than or equal to $0$ and the right endpoint is greater than or equal to $0$, then the same result from Section~\ref{subsec:away_0_continuous} applies. In particular, the conditions derived using cdf bounds for the continuous outcome setting can also be applied to the ordinal outcome setting, allowing us to assess whether the prediction interval needs to contain zero.

\section{Prediction Intervals for ITE with Covariates}\label{sec:pi_ite_covariates}
In this section, we illustrate the construction of ITE prediction intervals in the context of randomized experiments where additional covariates are observed for each individual. As we will demonstrate through an example, the prediction interval for the ITE conditional on covariates is not necessarily shorter than its marginal counterpart. In general, ITE prediction intervals and conditional prediction intervals are not directly comparable, as they address different inferential targets and rely on distinct sources of variability.
\subsection{Problem Setup}
Consider i.i.d. random samples $\left\{\left(D_i, X_i, Y_i\right)\right\}_{i=1}^n$ of $n$ individuals, where $D_i \in\{0,1\}$ is a binary treatment indicator, $X_i=\left(X_{i 1}, X_{i 2}, \ldots, X_{i p}\right)^T \in \mathbb{R}^p$ is a vector of observed covariates for each individual $i$, and $Y_i \in \mathbb{R}$ is the observed outcome for individual $i$ under the potential outcome framework.

We assume that each individual receives the treatment independently with equal probability $P(D_i=1 \mid X_i)=\pi$ for all $i$, where $0<\pi<1$ is a known constant. To simplify notation, we suppress the subscript $i$ in what follows. Note that this is equivalent to saying the treatment is randomized regardless of the observed covariates. Each individual has two potential outcomes $Y_1$ and $Y_0$ and we observe $Y = DY_1+(1-D)Y_0$.  We assume the stable unit treatment value assumption (SUTVA) that there is a single version of each treatment/control and no interference among the subjects. The individual treatment effect is defined as:
$$
\mathrm{ITE}=Y_1-Y_0
$$
\subsection{Example where a subset is more homogeneous}
\label{sec:example-subset-homogeneous}

Here we present a simple numeric illustration of a randomized experiment with two binary covariates, 
$X_1, X_2 \in \{0,1\}$, and a binary outcome \(Y\). 
Suppose there are $10000$ subjects in the treatment arm and $10000$ subjects in the control arm, allocated 1:1 at random. 
Table~\ref{tab:example-subset} displays how many subjects fall into each \((X_1,X_2)\) cell, along with the number of observed outcome \(Y=1\) and the corresponding probability that $Y=1$.

\begin{table}[htbp]
\centering
\begin{tabular}{ccrcrcrr}
\hline
\multicolumn{2}{c}{Covariates} & \multicolumn{3}{c}{Treatment ($D=1$)} & \multicolumn{3}{c}{Control ($D=0$)} \\
\cline{1-2}\cline{3-5}\cline{6-8}
\(X_1\) & \(X_2\) & \(n_{T}\) & \(Y=1\) & \(\hat{p}_{T}\) & \(n_{C}\) & \(Y=1\) & \(\hat{p}_{C}\) \\
\hline
0 & 0 & 3000 & 1200 & 0.40 & 3000 & 1000 & 0.33 \\
0 & 1 & 3000 & 1000 & 0.33 & 3000 &  800 & 0.27 \\
1 & 0 & 3000 & 3000 & 1.00 & 3000 &  0 & 0.00 \\
1 & 1 & 1000 &  800 & 0.80 & 1000 &  200 & 0.2 \\
\hline
\textbf{Total} & -- 
 & 10000 & 6000 & 0.60 
 & 10000 & 2000 & 0.20 \\
\hline
\end{tabular}
\caption{Synthetic data in which the treatment arm has overall $60\%$ of $Y=1$ and the control arm has $20\%$ of $Y=1$. 
Within certain subgroups (e.g.\ conditioning on \(X_1=1\)), the treatment effect is more homogeneous, but conditioning further (e.g.\ on \((X_1=1, X_2=1)\)) breaks the homogeneity again.}\label{tab:example-subset}
\end{table}

\subsection{Key observations}
\begin{itemize}
\item Without covariate adjustment, the treatment arm has probability $Y=1$ equal to $0.6$ and the control arm has probability $Y=1$ equal to $0.2$. 

\item Condition on \(X_1=1\), in treatment, among those with \(X_1=1\) (rows with \((X_1=1, X_2=0)\) and \((X_1=1, X_2=1)\)), we have 
  \(n_T = 3000 + 1000 = 4000\) individuals, with \(3000 + 800 = 3800\) individuals with $Y=1$, i.e.\ \(95\%\) of individuals with $X_1 =1$ in the treatment arm have positive outcomes.  
  In control arm, among \(X_1=1\), \(n_C = 3000 + 1000 = 4000\) with \(0 + 200 = 200\) individuals with $Y=1$, i.e.\ \(5\%\) of individuals with $X_1 =1$ in the control arm have positive outcomes.
\item Condition on \((X_1=1, X_2=1)\), we are restricting further to just the last row. In this case,  
\(80\%\) of individuals in the treatment arm have positive outcomes while \(20\%\) of individuals in the control arm have positive outcomes.

\end{itemize}

\subsection{Constructing ITE prediction intervals}
Now given this experiment data, suppose we want to construct a $90\%$ prediction interval for this people with $X_1=1, X_2=1$. First, under randomization we have:
$$P(Y\!=\!i\mid D\!=\!j,  X_1\!=\!1) = P( Y_j\!=\!i\mid X_1 = 1)$$
We can write the following two-way table:
\begin{center}
\begin{table}[h!]
\scalebox{0.85}{
\begin{tabular}{r|ccc}
 & $\hat{P}(Y\!=\!0\mid D\!=\!0, X_1 \!=\!1)=0.95$ & $\hat{P}(Y\!=\!1\mid D\!=\!0, X_1 \!=\! 1)=0.05$ \\[4pt]
 \hline
 &\\[-5pt]
 $\hat{P}(Y\!=\!0\mid D\!=\!1, X_1 \!=\!1)=0.05$ & $P(Y_1 = 0, Y_0 = 0\mid X_1=1)$& $P(Y_1 = 0, Y_0 = 1\mid X_1=1)$\\[4pt]
  $\hat{P}(Y\!=\!1\mid D\!=\!1, X_1 \!=\!1) = 0.95$ & $P(Y_1 = 1, Y_0 = 0\mid X_1=1)$ &$P(Y_1 = 1, Y_0 = 1\mid X_1=1)$ \\[4pt]
    \end{tabular}
\bigskip
}
\caption{Binary Treatment and Outcome Model with condition on $X_1 =1$}
\end{table}
\end{center}
By Frechet inequality, we can conclude that
$$P\left(Y_1-Y_0\in \{1\}\mid X_1=1\right) = P(Y_1 = 1, Y_0 = 0\mid X_1=1) \geq 0.95+0.95-1= 0.9$$
Note that if we instead condition on both $X_1 =1$ and $X_2 =1$, we will get: 
\begin{center}
\begin{table}[h]
\scalebox{0.85}{
\begin{tabular}{r|ccc}
 & $\hat{P}(Y\!=\!0\mid D\!=\!0, X_1 \!=\!1,X_2 \!=\!1)=0.8$ & $\hat{P}(Y\!=\!1\mid D\!=\!0, X_1 \!=\! 1,X_2 \!=\!1)=0.2$ \\[4pt]
 \hline
 &\\[-5pt]
 $\hat{P}(Y\!=\!0\mid D\!=\!1, X_1 \!=\!1,X_2 \!=\!1)=0.2$ & $P(Y_1 = 0, Y_0 = 0\mid X_1=1,X_2 \!=\!1)$& $P(Y_1 = 0, Y_0 = 1\mid X_1=1,X_2 \!=\!1)$\\[4pt]
  $\hat{P}(Y\!=\!1\mid D\!=\!1, X_1 \!=\!1,X_2 \!=\!1) = 0.8$ & $P(Y_1 = 1, Y_0 = 0\mid X_1=1,X_2 \!=\!1)$ &$P(Y_1 = 1, Y_0 = 1\mid X_1=1,X_2 \!=\!1)$ \\[4pt]
    \end{tabular}
\bigskip
}
\caption{Binary Treatment and Outcome Model condition on $X_1 =1$ and $X_2 = 1$}
\label{table_binary_cond_X1X2}
\end{table}
\end{center}
Based on this table, we can only conclude that:
$$P\left(Y_1-Y_0\in \{-1,0,1\} \mid X_1=1,X_2 =1 \right) \geq 0.9$$
since the proportion of $\mathrm{ITE} =0$ and $\mathrm{ITE} =-1$ condition on $X_1=1,X_2=1$ can both be greater than $0.2$.
\subsection{Discussion of implications}
In this example, it maybe paradoxical that for an individual with $X_1 = 1 , X_2 =1$, we can conclude that this person's individual treatment effect is $1$ with probability $90\%$ condition only on $X_1=1$ but instead get a trivial prediction interval condition on $X_1 =1$ and $X_2=1$. What extra information does $X_2$ give? Additionally conditioning on $X_2=1$ picks out a subgroup in which the treatment effect is not as tightly concentrated as it is in the overall $X_1=1$ group. That would indicate that $X_2$ is related to heterogeneity in the effect. Conditioning on $X_2=1$ and $X_1=1$ defines a subgroup in which the treatment effect is more heterogeneous. In this case, do we really want to make decisions based on the narrower prediction interval? And as we can see, prediction intervals for ITE may not shrink if we condition on more covariates.
\subsection{Relationship between conditional ITE and ITE}
We will consider the simple case where $X_1\in \{0,1\}$. First, we have
\[
P\bigl(Y_1-Y_0 = 1\bigr)
=
P\bigl(Y_1-Y_0 = 1,\,X_1=1\bigr)
+
P\bigl(Y_1-Y_0 = 1,\,X_1=0\bigr).
\]
By the law of total probability, we can also write
\begin{align*}   
P\bigl(Y_1-Y_0 = 1\bigr)&
=
P\Bigl(Y_1-Y_0 = 1 \,\bigm\vert\, X_1=1\Bigr)\,P(X_1=1)
\\&+
P\Bigl(Y_1-Y_0 = 1 \,\bigm\vert\, X_1=0\Bigr)\,P(X_1=0).
\end{align*}
Thus, even if 
\[
P\Bigl(Y_1-Y_0 = 1 \,\bigm\vert\, X_1=1\Bigr) \geq 0.9,
\]
it does \textbf{not} necessarily imply 
\[
P\Bigl(Y_1-Y_0 = 1\Bigr) \geq 0.9,
\]
because the overall (unconditional) probability also depends on $P(X_1=1)$ and $P(Y_1-Y_0=1 \mid X_1=0)$. If $P(X_1=1)$ is small or if $P(Y_1-Y_0=1 \mid X_1=0)$ is small, the unconditional probability may be much less than $0.9$.

\end{appendices}

\end{document}